\newtheorem{definition}{Definition}
\newtheorem{proposition}{Proposition}
\newtheorem{corollary}{Corollary}
\newtheorem{theorem}{Theorem}
\newtheorem{lemma}{Lemma}
\title{Achieving Equitability with Subsidy\footnote{This is the second version. We have fixed some errors and updated the related work.}}
\author{
Yuanyuan Wang$^1$ \hspace{15pt}
Tianze Wei$^2$ \hspace{15pt}\\ 
$^1$School of Mathematical Science, The Ocean University of China \\ \small
\texttt{wyy8088@stu.ouc.edu.cn} \\ 
$^2$Department of Computer Science, City University of Hong Kong \\ \small
\texttt{t.z.wei-8@my.cityu.edu.hk} \\ 
}
\date{}
\begin{document}
\maketitle

\begin{abstract}
We study the fair allocation problem of indivisible items with subsidy.
In this paper, we focus on the notion of fairness - equitability (EQ), which requires that items be allocated such that all agents value the bundle they receive equally.
First, we study the upper bounds of the minimum required subsidy to achieve EQ in different item settings and provide the corresponding lower bounds.
Second, we consider the bounded subsidy for achieving EQ and another popular notion of fairness - envy-freeness (EF), and give a characterization of allocations that can achieve both EQ and EF.
Finally, we analyze the bounds of subsidy of allocations achieving fairness and efficiency (utilitarian social welfare or Nash welfare) and design several polynomial-time algorithms to compute the desired allocation.
\end{abstract}

\section{Introduction}
The fair division involves \textit{fairly} distributing a set of items among a group of agents with heterogeneous preferences. 
This problem has received significant attention in diverse research communities, such as algorithmic game theory and artificial intelligence, and its relevant theory already has practical applications like the widely-used fair division platform spliddit.org \cite{goldman2015spliddit}.
The notion of fairness can be conceptualized from various perspectives, leading to different formal definitions.
Two prominent fairness criteria are \textit{envy-freeness} (EF), where each agent prefers her own bundle of items to any other agent's bundle, and \textit{equitability} (EQ), which requires that all agents receive bundles of equal value. 
In our paper, we focus on the fairness criterion - EQ, whose practical importance has been empirically validated through experimental studies, such as the free-form bargaining experiments \cite{herreiner2009envy}.
Intuitively, EQ aims to minimize the value gap between the highest and lowest valued bundles, and notably coincides with EF when agents have identical valuations.

% and EQ corresponds to EF when agents have identical valuations.
%The practical importance of equitability as a fairness concept has been highlighted in an experimental study conducted by \cite{herreiner2009envy}.

For divisible items (e.g., cakes), EQ allocations are guaranteed to exist by the ``Moving Knife Procedure'' \cite{segal2018resource}. %Like envy-freeness, equitability has been traditionally studied in divisible item allocation.
%In this setting, it is well known that equitable (EQ) allocation always exists \cite{dubins1961cut}.
%On the computability side, \cite{procaccia2017lower} showed that no finite procedure can find an exact EQ allocation, but approximately EQ allocations can be computed in a finite number of steps \cite{cechlarova2012computability}.
However, in the setting of indivisible items, where each item must be allocated integrally to an agent, EQ allocations cannot be guaranteed. 
Consider a simple example with two agents and a single piece of jewelry that both agents highly value.
It is trivial that no allocation can achieve EQ.
Consequently, much of the recent research has focused on the relaxation of equitability, such as equitability up to one item (EQ1) \cite{freeman2019equitable,hosseini2025equitable} and equitability up to any item (EQX) \cite{gourves2014near,freeman2019equitable, barman2024nearly}, %which is a feasible way to circumvent this issue. 
in order to approximate exact equitability (EQ) as closely as possible.
%Similarly, envy-free (EF) allocations also may not exist in the setting of indivisible items.
%In the fair division literature, one way to circumvent this issue is to relax the notion of envy-freeness, such as envy-free up to one item (EF1), and another way is monetary compensation (subsidy).
% An alternative approach is that we have enough money at our disposal, which may help alleviate inequality.
An alternative approach to achieving fairness involves monetary compensation.
Returning to our jewelry example. 
EQ can be achieved by allocating agent 1 this jewelry while providing agent 2 with the money compensation equal to agent 1's valuation of the jewelry.
In this paper, we study whether EQ could be achieved by the addition of a small quantity of money, namely \textit{subsidy}, and further proceed to consider whether both EQ and EF can be guaranteed with the same limited subsidy.
%In the literature on fair division, 
There exists a substantial body of work on achieving precise fairness through subsidy, with detailed discussions available in the relevant literature \cite{halpern2019fair,brustle2020one,barman2022achieving,goko2024fair,wu2023one,wu2024tree}.

%The problem of fair division involves not only ensuring fairness but also efficiency that maximizes the collective welfare of the agents. 
Beyond fairness considerations, the fair division also focuses on efficiency that maximizes the collective welfare of agents. 
In our work, we focus on utilitarian social welfare and Nash welfare as primary efficiency measures. 
% The formal definitions of these measures can be found in the preliminaries section.
Then, a natural question arises: can an allocation achieve "optimally bounded" subsidy, where the upper bound of required minimum subsidy matches the lower bound, while simultaneously guaranteeing efficiency?
This motivates our investigation into the compatibility between ``optimally bounded'' subsidy and efficiency.
% to achieve the goal that its subsidy for achieving fairness is ``optimally bounded'', i.e., for this allocation, the upper bound of the required minimum subsidy meets the lower bound, and the efficiency can be guaranteed simultaneously.
% Therefore, going one step further, we study the compatibility between the ``optimally bounded'' subsidy and efficiency.
To summarize, our paper studies the fair allocation of $m$ indivisible items among a set of $n$ agents with subsidy and addresses the following fundamental questions:
\begin{quote}
    \textit{ How much subsidy is required to achieve EQ?
     What condition must be satisfied for an allocation to achieve both EQ and EF with the same subsidy?
   %What efficiency can be guaranteed when an allocation needs at most optimally bounded subsidy?  
     What efficiency can an allocation guarantee when its required minimum subsidy is ``optimally bounded''?}
\end{quote}

% Fair allocation of indivisible items is a hot research issue in algorithmic game theory and artificial intelligence, due to its wide applications, such as Course Match and the websites Spliddit (spliddit.org). 
% In this paper, we mainly focus on the notion of fairness-equitability, which requires all agents to receive the same value in an allocation and was initially examined in relation to the fair allocation of divisible items \cite{dobovs2013existence, aumann2015efficiency,cheze2017existence}.
% The surprising finding is that, there always exists an equitable allocation \cite{dubins1961cut}.
% Unfortunately, for allocating indivisible items, an equitable allocation does not always exist even for one item and two agents. Moreover, deciding if there exists an equitable allocation is NP-complete, which is done by a reduction of PARTITION.
% This has led to the relaxations of equitability, such as, equitability up to any item \cite{freeman2019equitable} and equitability up to one item \cite{gourves2014near}.
% Besides, another interesting idea is that if we have enough money at our disposal, it should help eliminate envy \cite{maskin1987fair}.
% In this paper, we study that whether equitability could be attained by the addition of a single divisible good, namely subsidy.

\subsection{Our Contribution}
\begin{table}[tb]
    \centering
       \small \setlength{\tabcolsep}{1.5mm}{
    \begin{tabular}{c|c|c|c}
    \toprule
        Items &  Lower Bound
         &Upper Bound&Property\\
    \midrule
    % Additive v.& general & $(n-1)m$
    % & $(n-1)m$\\
    %  &EQ1 & $(n-1)$
    % & $(n-1)$\\
    % \midrule
    Goods (Chores) & \makecell[c]{$n-1$ (Prop \ref{prop: lower_bound_subsidy_allocation_can_be_chosen})}
    & \makecell[c]{$n-1$ (Thm \ref{the: nonomontone_subsidy})} & EC\\
    \midrule
    \makecell[c]{Mixed Items} & \makecell[c]{$\max \{n-1, m\}$ (Prop \ref{prop: lower_bound_subsidy_allocation_can_be_chosen})} & \makecell[c]{$(n-1)^{\clubsuit}$ or $(n+m-2)^{\spadesuit}$ (Thm \ref{the: nonomontone_subsidy})}&EC\\
     \midrule
    Goods &\makecell[c]{$(n-1)m$ (Prop \ref{prop: subsidy_ef_eq_lower_bound})}&\makecell[c]{$(n-1)m$ (Coro \ref{coro: susbsidy_ef_eq_worst_case})} & EC\&EFB \\
    \midrule
    Chores &\makecell[c]{$m$ (Prop \ref{prop: subsidy_ef_eq_lower_bound})}&\makecell[c]{$
    m$ (Coro \ref{coro: susbsidy_ef_eq_worst_case})} & EC\&EFB \\
    \bottomrule
    \end{tabular}
    }
    \caption{The lower and upper bounds of the required minimum subsidy.
    EC means equitable-convertible (Definition \ref{def: ec}), EFB means envy-freeable (Definition \ref{def: efb}), and EC\&EFB means that both equitable-convertible and envy-freeable (Definition \ref{def: ec_efb}).
    $\clubsuit$ means objective valuations and $\spadesuit$ means additive valuations. }
    \label{tab: subsidy_eq}
\end{table}

In this paper, we study allocating $m$ indivisible items to $n$ agents. 
The first part focuses on the minimum subsidy that the allocation needs for achieving fairness, including EQ and EF, and the main results are summarized in Table \ref{tab: subsidy_eq}. 
More specifically, for EQ exclusively, we establish the following results:

\begin{itemize}
    \item For the goods or chores setting, we show that the lower bound of required minimum subsidy is $(n-1)$. For the mixed items setting, we show that the lower bound of required minimum subsidy is $\max\{n-1,m\}$.
    \item We propose the Squeeze Algorithm (Algorithm \ref{alg: nonmonotone_subsidy}) to compute an allocation that requires the minimum subsidy at most (1) $(n-1)$ to achieve EQ for the goods or chores setting, or (2) $(n-1)$ or $(n+m-2)$ to achieve EQ for the mixed items setting with objective valuations or additive valuations respectively.
    % \item For the mixed items setting with objective valuations or additive valuations, we show that the Squeeze Algorithm (Algorithm \ref{alg: nonmonotone_subsidy}) computes an allocation where the required minimum subsidy is at most $(n-1)$ or $(n+m-2)$ to achieve EQ.
    % where the lower bound of required minimum subsidy is $\max\{n-1,m\}$.
    % \item For mixed items setting with objective valuations, we show that the Algorithm \ref{alg: nonmonotone_subsidy} computes an allocation that requires the tight subsidy of $(n-1)$ to achieve equitability. 
\end{itemize}
 %When we consider the properties of both EQ and EF, we have the following results
For the combination of EQ and EF, we obtain the following results:
\begin{itemize}
    \item We provide a characterization for allocations simultaneously achieving EQ and EF with the same payment vector. %(Theorem \ref{the: characterization_ef_eq}).
    \item For the goods or chores setting, we show that as long as an allocation is equitable-convertible and envy-freeable with the same payment vector, the required minimum subsidy to achieve both EQ and EF is at most $(n-1)m$ or $m$, which is tight.
\end{itemize}
The second part explores fairness with the ``optimally bounded'' subsidy, maintaining efficiency in the goods setting, and the main results are summarized in Table \ref{tab: subsidy_fairness_efficiency}.  
More specifically, for EQ exclusively, we have the following results:
\begin{itemize}
\item For additive valuations, 
%we construct a counterexample to show that in such an instance, any allocation that needs at most $(n-1)$ minimum subsidy to achieve EQ cannot guarantee any efficiency 
we construct a counterexample demonstrating that allocations achieving EQ with at most $(n-1)$ subsidy cannot guarantee any efficiency. 
    \item For normalized subadditive valuations, we propose the Balanced-Packing Algorithm (Algorithm \ref{alg: eq1_subadditive_sw}) to compute an allocation that is EQ1 and guarantees a tight $O(\frac{1}{n})$-max-USW in polynomial time. For two agents, we present a polynomial time algorithm computing an allocation that is EQ1 and $\frac{1}{3}$-max-NSW.
    Additionally, in both cases, the required minimum subsidy is at most $(n-1)$.
    \item For normalized additive valuations, there exists an allocation that is EQ1 and guarantees a $\frac{1}{\sqrt{2}}$-max-NSW for two agents, which requires at most $(n-1)$ minimum subsidy. 

    \end{itemize}
For EQ and EF jointly, we obtain the following results:
\begin{itemize}
    \item For additive valuations, we prove that the maximum utilitarian social welfare allocation is both equitable-convertible and envy-freeable.
    Additionally, we construct a counterexample to show that any equitable-convertible and envy-freeable allocation cannot guarantee any efficiency in terms of Nash welfare. 
    \item For matroid rank valuations, we show that there exist allocations that maximize Nash welfare or utilitarian social welfare, are both equitable-convertible and envy-freeable, and they can be computed in polynomial time.
\end{itemize}

\begin{table}[tb]
    \centering
    \small
    \setlength{\tabcolsep}{0.06cm}{
    \begin{tabular}{c|c|c|c|c}

    \toprule
       Valuation&\makecell[c]{The number of \\ Agents} & \makecell[c]{Upper Bound \\of Subsidy} &\makecell[c]{Fairness/Efficiency}
         &Property 
         \\         
    \midrule
    Additive & $n$ & $n-1$ & \makecell[c]{Inapproximability of USW/NSW (Prop \ref{the: subsidy_no_efficiency_scaled})} &   EC\\
    \midrule
    \makecell[c]{Subadditive \\ (Normalized)} & $n$ & $n-1$ & \makecell[c]{EQ1+$O(\frac{1}{n})$-Max-USW
    (Thm \ref{the: usw_subadditive})} & EC\\
    \midrule
    \makecell[c]{Subadditive \\ (Normalized)} & 2 & $1$ & \makecell[c]{EQ1+$\frac{1}{3}$-Max-NSW
    (Thm \ref{the: subadditive_two_agents_nash_welfare})} & EC\\
    \midrule
     \makecell[c]{Additive\\ (Normalized)} & 2 & $1$ & \makecell[c]{EQ1+$\frac{1}{\sqrt{2}}$-Max-NSW
    (Thm \ref{the: additive_two_agents_nash_welfare_existence})}& EC\\
    \midrule
    Additive & $n$ & $(n-1)m$ & \makecell[c]{Inapproximability of NSW (Prop \ref{prop: no_efficiency_ef_eq})\\ Max-USW (Prop \ref{the: additive_utilitarian_eq_ef})} &   EC\&EFB\\
    % \midrule
    % Additive & $n$ & $(n-1)m$ & \makecell[c]{Inapproximability\\ of NSW \\(Prop \ref{prop: no_efficiency_ef_eq})} &   EC\&EFB\\
    \midrule
    Matroid Rank & $n$ & $(n-1)m$ & \makecell[c]{Max-USW/NSW (Thm \ref{the: matroid_rank_eq_ef})} &   EC\&EFB\\
    \bottomrule
    \end{tabular}
    }
    \caption{Bounded subsidy and efficiency, Max-NSW means maximum Nash welfare, Max-USW means maximum utilitarian social welfare, EC means equitable-convertible, and EFB means envy-freeable.}
    \label{tab: subsidy_fairness_efficiency}
\end{table}

% \begin{table}[tb]
%     \centering
%     \setlength{\tabcolsep}{1.5mm}{
%     \begin{tabular}{lllcc}
%     \toprule
%         Valuations & Allocations &UB
%          &LB\\
%     \midrule
%     Additive v.& general & $(n-1)m$
%     & $(n-1)m$\\
%      &EQ1 & $(n-1)$
%     & $(n-1)$\\
%     \midrule
%     Matroid rank v. &general & $(n-1)m$
%     & $(n-1)m$\\
%     \midrule
%     Nonmonotone v. & 
%     &$n+m-2$
%     &$\max\{n-1, m\}$\\
%     \bottomrule
    
%     \end{tabular}
%     }
%     \caption{Subsidy for Equitable-Convertible
% Allocations}
%     \label{tab:my_label}
% \end{table}

\subsection{Related Work}
In the fair allocation of indivisible items, there is a line of work about the relaxation of EQ. 
\cite{gourves2014near} shows that EQX allocation always exists for additive valuations and can be computed in polynomial time.
\cite{barman2024nearly} studies the existence of EQX allocations in the goods setting when agents have beyond additive valuations,
where the existence algorithm is not executed in polynomial time.
\cite{hosseini2025equitable} studies EQ1 allocations in the setting of mixed goods and chores when agents have additive valuations. 
In our paper, we study the exact EQ fairness with the subsidy, where one side result is the polynomial time computation of EQ1 allocations for the goods or chores setting beyond additive valuations.

Moreover, money compensation has been prevalent in classic economic literature \cite{maskin1987fair,alkan1991fair}.
When agents have additive valuations,
\cite{halpern2019fair} 
characterizes 
an allocation that can be 
envy-free with some payments, which is called envy-freeable allocation,  
and show that an envy-freeable allocation requires a subsidy of 
at most $(n-1)m$.
\cite{brustle2020one} propose a polynomial time algorithm to compute an envy-freeable allocation where each agent's subsidy is at most one.
Later, \cite{caragiannis2021computing} studies the optimization of computing the minimum subsidy and showed the additive approximation algorithm for bounded subsidy.
\cite{caragiannis2021interim} initiates the study of interim envy-freeness (iEF) allocation with subsidy and focuses on several different payments.
When agents have monotone valuations,
\cite{brustle2020one} proves that there exists an envy-freeable allocation where the subsidy to each agent is at most $2(n-1)$.
Consequently, \cite{kawase2024towards} improves the upper bound of each agent's subsidy to at most $(n-\frac{3}{2})$.
In the restricted valuation domain, \cite{barman2022achieving} studies the general valuations with binary marginal and showed the optimal upper bound $(n-1)$ of the total subsidy.
\cite{goko2024fair} shows that when agents have matroid valuations, there is a truthful allocation mechanism that achieves envy-freeness and utilitarian optimality by subsidizing each agent with at most one.
Money compensation is also studied in other fairness notions, such as proportional allocations with subsidy \cite{wu2023one, wu2024tree}, EF and EQ allocations with money transfer for agents with superadditive valuations \cite{aziz2021achieving}, where the negative payment is allowed, and the total subsidy is zero, which is different from our setting.
In our paper, we aim to address the gap in the literature on fair allocation with subsidy, where we consider EQ fairness.

For the fair division problem that satisfies both fairness and efficiency, \cite{freeman2019equitable,freeman2020equitable} explore the compatibility between EQX (EQ1) and Pareto optimality,
and
\cite{sun2023equitability} studies the equitability and utilitarian social welfare in allocating indivisible items when agents have additive valuations.
In addition, 
\cite{narayan2021two} shows that for general monotone valuations, there always exists an EF allocation with transfer payment, which guarantees a high fraction of optimal Nash welfare, and they also study the compatibility with utilitarian social welfare.

\section{Preliminaries}
For $k \in \mathbb{N}$, let $[k] = \{1, \ldots, k\}$. 
Let $N = [n]$ denote the set of agents and let $M = \{ e_1, \ldots, e_m\}$ denote the set of $m$ indivisible items.
Each agent $i \in N$ is endowed with a valuation function $v_i: 2^{M} \rightarrow \mathbb{R}$.
We assume that $v_i(\emptyset)=0$.
For any valuation function $v_i$, we define the marginal value of an item $e\in M$ with respect to a bundle $S\subseteq M$,
as $\Delta_i(S,e)=v_i(S \cup e) - v_i(S)$. If $\Delta_i(S,e)\ge 0$ for any $S \subseteq M$, we say that an item $e \in M$ is a \textit{good} for agent $i \in N$.
Additionally, if $\Delta_i(S,e)< 0$ for some $S \subseteq M$, we say that an item $e \in M$ is a \textit{chore} for agent $i \in N$.
We consider three different settings of items: (1) all items are goods, (2) all items are chores, and (3) mixtures of goods and chores, where an agent may derive positive, negative, or zero marginal valuation from an item. 

A valuation function $v_i$ is additive if $v_i(S) =\sum_{e \in S}v_i(e)$ for any $S \subseteq M$, is submodular if $ v_i(S) + v_i(T) \geq v_i(S \cup T) + v_i(S \cap T)$ for any $S, T \subseteq M$, and is subadditive if $v_i(S \cup T) \leq v_i(S) + v_i(T)$ for any $S, T \subseteq M$. 
%A valuation $v_i$ is submodular if $ v_i(S) + v_i(T) \geq v_i(S \cup T) + v_i(S \cap T)$ for any $S, T \subseteq M$.
A valuation $v_i$ is called a matroid rank function if $v_i$ is submodular and the marginal value of each item is zero or one.
%A valuation $v_i$ is subadditive if $v_i(S \cup T) \leq v_i(S) + v_i(T)$ for any $S, T \subseteq M$.
%Without loss of generality, 
W.L.O.G., we assume a standard value oracle access to valuations, that given any agent $i \in N$ and any subset $S \subseteq M$, it returns $v_i (S) \in \mathbb{R}$ in $O(1)$ time.
% Here, we assume the subadditive valuation $v_i$ is given by value oracle that returns $v_i(S)$ for any $S \subseteq M$ in $O(1)$ time.
%If $\Delta_i(S,e)\ge 0$ for any $S \subseteq M$, we say an item $e \in M$ is a \textit{good} for agent $i \in N$.
%Additionally, if $v_i(S \cup e) \leq v_i(S)$ for any $S \subseteq M$, with at least one of inequalities being strict, we say that an item $e \in M$ is a \textit{chore} for agent $i \in N$.
%A valuation function $v_i$ is monotone nondecreasing if all items are goods for agent $i$.
%Similarly, a valuation function $v_i$ is monotone nonincreasing if all items are chores for agent $i$.
% A valuation function is nonmonotone additive if an item can be a good or chore for different agents, and this function is additive.
%\tianze{A valuation function $v_i$ is nonmonotone if for agent $i$, she values some items as goods and some as chores.}
Unless specified otherwise, we assume that each agent's valuation function $v_i$ is scaled so that the maximum marginal value of each item is at most one, i.e., $\lvert \Delta_i(S,e) \rvert \leq 1$ for any $i \in N$, $e \in M$ and $S \subseteq M \setminus \{e\}$.
%To simplify the notation, we use $v_i(e)$ instead of $v_i(\{e\})$.
For a singleton set $\{e\}$, we will use $v_i(x)$ as a shorthand for $v_i(\{x\})$.
% Specifically, for additive valuations, it means that $v_i(e) \leq 1$ for any agent $i \in N$ and any item $e \in M$.
A fair allocation instance is denoted as $\mathcal{I} = \langle M, N, \boldsymbol{v} \rangle$, where $\boldsymbol{v} = (v_1, \ldots, v_n)$.
An allocation ${\bf A} = (A_1, \ldots, A_n)$ is a $n$-partition of $M$, where $A_i$ is the bundle allocated to agent $i$, $\bigcup_{i \in N}A_{i} = M$ and $A_{i} \cap A_{j} = \emptyset$ for any two agents $i \neq j$.
Let $\Pi_{n}(M)$ denote the set of all $n$-partitions of the items. 
%Given an instance $\mathcal{I}$, an allocation ${\bf A} = (A_1, \ldots, A_n) \in \Pi_{n}(M)$, where $A_i$ is the bundle allocated to agent $i$, is the partition of the set of $m$ items among $n$ agents, i.e., $\bigcup_{i \in N}A_{i} = M$ and $A_{i} \cap A_{j} = \emptyset$ for any two agents $i \neq j$.

\subsection{Fairness and Efficiency Notions}

\begin{definition}[Equitability]

An allocation ${\bf A}$ is equitable (EQ) if for any two agents $i,j \in N$, $v_i(A_i) = v_j(A_j)$.
    
\end{definition}

\begin{definition}
[Equitability up to One Item]
\label{def: eq1}
    
An allocation ${\bf A}$ is equitable up to one item (EQ1), if for any two agents $i,j \in N$ with $A_i, A_j \neq \emptyset$, $v_i(A_i \setminus \{e\}) \geq v_j(A_j \setminus \{e\})$ for some item $e \in A_i \cup A_j$.
\end{definition}

\begin{definition}[Envy-freeness]

An allocation ${\bf A}$ is envy-free (EF) if for any two agents $i,j \in N$, $v_i(A_i) \geq  v_i(A_j)$.   
\end{definition}

% We want to study whether the exact EQ or EF allocations can be guaranteed by allocating some money to agents.
Specifically, Let $\boldsymbol{p} =(p_1, \ldots, p_n)$ be the payment vector, where each agent $i$ receives a payment $p_i \geq 0$.

\begin{definition}[Equitablility with Payment]

An allocation with payment $({\bf A}, \boldsymbol{p})$ is said to be equitable if for any two agents $i,j \in N$, $v_i(A_i) + p_i = v_j(A_j) + p_j$.
    
\end{definition}

\begin{definition}[Envy-freeness with Payment]

An allocation with payment $({\bf A}, \boldsymbol{p})$ is said to be envy-free if for any two agents $i,j \in N$,  $v_i(A_i) + p_i \geq  v_i(A_j) + p_j$.
    
\end{definition}

\begin{definition}[Equitable-convertible]
\label{def: ec}
An allocation ${\bf A}$ is said to be equitable-convertible if there exists a payment vector $\boldsymbol{p}$ such that $({\bf A}, \boldsymbol{p})$ is EQ.   
\end{definition}
\begin{definition}[Envy-freeable]
\label{def: efb}
An allocation ${\bf A}$ is called envy-freeable if there exists a payment vector $\boldsymbol{p}$ such that $({\bf A}, \boldsymbol{p})$ is EF.   
\end{definition}

\begin{definition}[Equitable-convertible and Envy-freeable]
\label{def: ec_efb}
An allocation ${\bf A}$ is both equitable-convertible and envy-freeable if there exists a payment vector $\boldsymbol{p}$ such that $({\bf A}, \boldsymbol{p})$ is both EQ and EF.   
\end{definition}

\begin{definition}[Utilitarian Social Welfare]
Given an allocation ${\bf A}$, its utilitarian social welfare ($USW$) is defined as the sum of values that the agent derives from the allocation, i.e., $USW({\bf A}) = \sum_{i \in N}v_i(A_i)$. 
For any $\alpha \in [0,1]$, we say that an allocation $\textbf{A}$ is $\alpha$-max-USW if it holds that $USW(\textbf{A}) \geq \alpha \cdot \max_{\textbf{B} \in \Pi_{n}(M)} USW (\textbf{B})$.
\end{definition}
\begin{definition}[Nash Welfare]
Given an allocation ${\bf A}$ in the goods setting, its Nash welfare ($NSW$) is defined as the geometric mean of values that the agent derives from the allocation, i.e., $NSW({\bf A}) = (\prod_{i \in N}v_i(A_i))^{\frac{1}{n}}$. 
For any $\alpha \in [0,1]$, we say that an allocation $\textbf{A}$ is $\alpha$-max-NSW if it holds that $NSW(\textbf{A}) \geq \alpha \cdot \max_{\textbf{B} \in \Pi_{n}(M)} NSW (\textbf{B})$.
\end{definition}

\section{Bounding Subsidy for Equitable-Convertible Allocations}
\label{sec: subsidy_equitability}
In this section, we study the required minimum subsidy to achieve equitability in the goods, chores, and mixed items settings.
% \tianze{In this section, we study the required minimum subsidy to achieve equitability in different settings, including monotone, objective, and %nonmonotone 
% \textcolor{blue}{mutex} additive valuations.
% \textcolor{blue}{
% A valuation $v_i$ is \textit{mutex} if there exist a item $e\in M$ and a set $S\in M$, $v_i(S\cup\{e\})-v_i(S)<0$.}
%We are interested in two cases where an (equitable-convertible) allocation is given to us, and where we can choose such an allocation to minimize the subsidy.
We examine two distinct scenarios: one in which we are provided with an (equitable-convertible) allocation, and the other in which we can select such an allocation to minimize the required subsidy.

\subsection{When the Allocation Is Given}
 
First, we note that every allocation is equitable-convertible.
i.e., for an arbitrary allocation, we can always find a feasible payment vector that enables this allocation to be EQ.
To illustrate this, we present the following lemma to characterize the optimal payment vector when an allocation is given.
\begin{lemma}
\label{lem: equitable_minimum_payment}
  For any allocation ${\bf A}$, 
  %consider a payment vector $\boldsymbol{p}^*$, such that $p_i^* = \max_{j \in N}v_j(A_j) - v_i(A_i)$, 
  let $\boldsymbol{p}^*$ be a payment vector such that $p_i^* = \max_{j \in N}v_j(A_j) - v_i(A_i)$. Then (1) $({\bf A}, \boldsymbol{p}^*)$ is EQ, and (2) this payment vector minimizes the total subsidy required for achieving EQ.

\end{lemma}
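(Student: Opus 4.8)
The plan is to treat the two claims separately, and in both cases to exploit the fact that under $\boldsymbol{p}^*$ every agent's value-plus-payment collapses to the single common quantity $V := \max_{j \in N} v_j(A_j)$.

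For claim (1), I would simply substitute: for each agent $i$, $v_i(A_i) + p_i^* = v_i(A_i) + \big(V - v_i(A_i)\big) = V$, which is independent of $i$. Hence for any two agents $i,j \in N$ we get $v_i(A_i) + p_i^* = V = v_j(A_j) + p_j^*$, so $({\bf A}, \boldsymbol{p}^*)$ is EQ. I would also verify feasibility of the payments: since $V$ is the maximum bundle value, $p_i^* = V - v_i(A_i) \geq 0$ for every $i$, so $\boldsymbol{p}^*$ is a legitimate (non-negative) payment vector.

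For claim (2), I would argue that $V$ is the smallest possible equalized value, and that minimizing the equalized value is exactly what minimizes the total subsidy for a \emph{fixed} allocation. Let $\boldsymbol{p}$ be any payment vector such that $({\bf A}, \boldsymbol{p})$ is EQ, and let $c$ denote the common value, i.e., $v_i(A_i) + p_i = c$ for all $i$. Because each $p_i \geq 0$, we have $c = v_i(A_i) + p_i \geq v_i(A_i)$ for every $i$, and taking the maximum over $i$ yields $c \geq V$. The total subsidy then satisfies $\sum_{i \in N} p_i = \sum_{i \in N} (c - v_i(A_i)) = nc - \sum_{i \in N} v_i(A_i) \geq nV - \sum_{i \in N} v_i(A_i) = \sum_{i \in N} p_i^*$, which shows that $\boldsymbol{p}^*$ attains the minimum total subsidy.

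The argument is elementary and I expect no genuine obstacle; the only point requiring care is the observation in claim (2) that non-negativity of payments forces the equalized value $c$ to be at least the maximum bundle value $V$, which is precisely what pins down optimality. Everything else is a direct computation that makes no use of additivity, so the lemma holds for arbitrary valuation functions.
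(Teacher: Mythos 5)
Your proof is correct and follows essentially the same route as the paper: both hinge on the observation that non-negativity of payments forces the common equalized value to be at least $\max_{j}v_j(A_j)$. The only cosmetic difference is that the paper establishes the slightly stronger pointwise statement $p_i^* \leq p_i$ for every agent $i$ (via contradiction), whereas you sum over agents to bound the total subsidy directly; both yield claim (2).
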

\begin{proof}
For the first claim, let $j^{max} \in \arg \max_{j \in N}v_j(A_j)$.
 For any agent $i \in N$, we have $p_i^* = v_{j^{max}}(A_{j^{max}}) - v_i(A_i)$. Rearranging this, we get $v_i(A_i) + p_i^* = v_{j^{max}}(A_{j^{max}})$, which means that this allocation $({\bf A}, \boldsymbol{p}^*)$ is EQ.
 
For the second claim, it suffices to show that for any feasible payment vector $\boldsymbol{p}$, we have $p_i^* \leq p_i$ for every agent $i \in N$. 
Suppose, for the contradiction, that there exists a feasible payment vector $\boldsymbol{p}^{\prime}$ such that $p_{i^{\prime}}^{\prime} < p_{i^{\prime}}^{*} = v_{j^{max}}(A_{j^{max}}) - v_{i^{\prime}}(A_{i^{\prime}}) $ for some agent $i^{\prime} \in N$. 
Then, we have $v_{i^{\prime}}(A_{i^{\prime}}) + p_{i^{\prime}}^{\prime} < v_{i^{\prime}}(A_{i^{\prime}}) + v_{j^{max}}(A_{j^{max}}) - v_{i^{\prime}}(A_{i^{\prime}}) = v_{j^{max}}(A_{j^{max}}) \leq v_{j^{max}}(A_{j^{max}}) + p_{j^{max}}^{\prime}$, which contradicts that $\boldsymbol{p}^{\prime}$ is a feasible payment vector.
\end{proof}

Next, we present the following proposition to demonstrate the required minimum subsidy for a specific allocation in the worst case when items are goods or chores.

\begin{proposition}
\label{prop: allocation_given_eq_minmum_subsidy}
     Given an allocation ${\bf A}$, in the worst case, the minimum required subsidy is $(n-1)m$ for the goods setting or $m$ for 
 the chores setting to achieve EQ. 
\end{proposition}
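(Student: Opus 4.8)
The plan is to reduce everything to the closed form for the minimum total subsidy supplied by Lemma \ref{lem: equitable_minimum_payment}, and then to bound that quantity separately in the goods and chores settings before exhibiting matching worst-case instances. By Lemma \ref{lem: equitable_minimum_payment}, for a fixed allocation ${\bf A}$ the minimum total subsidy equals
$$\sum_{i \in N} p_i^* = \sum_{i \in N}\left(\max_{j \in N} v_j(A_j) - v_i(A_i)\right) = n\,V_{\max} - \sum_{i \in N} v_i(A_i),$$
where I write $V_{\max} = \max_{j\in N} v_j(A_j)$ and let $j^{max} \in \arg\max_{j\in N} v_j(A_j)$. The whole proposition then becomes a question of how large this expression can be over all allocations and instances, which is exactly where the scaling assumption $\lvert \Delta_i(S,e)\rvert \le 1$ enters.

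For the goods setting I would first observe that every marginal is nonnegative and at most one, so $0 \le v_i(A_i) \le \lvert A_i\rvert \le m$ for every agent $i \in N$. Hence the summand indexed by $j^{max}$ vanishes, while each of the remaining $n-1$ summands satisfies $V_{\max} - v_i(A_i) \le V_{\max} \le m$. Summing over the $n-1$ agents other than $j^{max}$ yields the upper bound $(n-1)m$. For the chores setting the signs flip: all marginals lie in $[-1,0]$, so $-\lvert A_i\rvert \le v_i(A_i)\le 0$, which forces $V_{\max}\le 0$ and, since the bundles partition $M$, $\sum_{i\in N} v_i(A_i) \ge -\sum_{i\in N}\lvert A_i\rvert = -m$. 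Substituting into the closed form gives $n V_{\max} - \sum_{i\in N} v_i(A_i) \le 0 + m = m$, the claimed chores bound.

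It remains to show that both bounds are attained. For goods, I would take $m$ identical goods valued at $1$ by every agent under additive valuations and allocate all of them to a single agent; then $V_{\max}=m$, each of the other $n-1$ agents receives value $0$, and the total subsidy is exactly $(n-1)m$. For chores, I would take $m$ identical chores valued at $-1$ by every agent and again give all of them to one agent; then $V_{\max}=0$, that agent contributes $0-(-m)=m$ to the sum while every other summand vanishes, so the total subsidy is exactly $m$.

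The only point requiring genuine care is matching the sign conventions between the two settings and verifying that the scaling keeps $V_{\max}$ and the bundle values inside the stated ranges; once the closed form from Lemma \ref{lem: equitable_minimum_payment} is in hand, everything else is a direct computation. Consequently I expect no substantive obstacle here beyond this bookkeeping, and the witnessing instances above confirm that the upper bounds $(n-1)m$ and $m$ are tight in the worst case.
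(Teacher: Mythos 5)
Your proof is correct and follows essentially the same route as the paper: it invokes Lemma \ref{lem: equitable_minimum_payment} to reduce the minimum subsidy to the gap $\sum_{i}(\max_j v_j(A_j) - v_i(A_i))$, bounds each term by $m$ (goods) or bounds the aggregate by $m$ (chores) using the scaling of marginals, and exhibits the identical worst-case instances (identical valuations, all items to one agent). The only cosmetic difference is that you package the bound as the closed form $n V_{\max} - \sum_i v_i(A_i)$ rather than arguing term by term.
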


\begin{proof}
First, we consider the goods setting.
    For the lower bound, consider an instance with $n$ agents who have identical additive valuations and $m$ items, where for any item $e$, all agents value it at 1.
  Let ${\bf A}$ be an allocation where agent $i^*$ obtains all items.
  By Lemma \ref{lem: equitable_minimum_payment}, we can derive the optimal payment vector $\boldsymbol{p}$, where $p_{i^*} = 0$ and $p_{i} = m$ for any $i \neq i^*$.
  Therefore, we need at least $(n-1)m$ subsidy.
    Regarding the upper bound, let ${\bf A}$ be any allocation. We have $\max_{i\in N}v_i(A_i) - \min_{j \in N}v_j(A_j) \leq m$ since there are at most $m$ items whose total value is at most $m$ in one bundle. 
    By Lemma \ref{lem: equitable_minimum_payment} and the fact that the agent whose bundle has the highest value must receive zero payment, the required minimum subsidy is at most $(n-1)m$.

    Next, we consider the chores setting.
    For the lower bound, consider an instance with $n$ agents who have identical additive valuations and $m$ items, where for any item $e$, all agents value it as $-1$.
    Let ${\bf A}$ be an allocation where agent $i^*$ receives all items.
    By Lemma \ref{lem: equitable_minimum_payment}, we can get the optimal payment vector $\boldsymbol{p}$, where $p_{i^*} = m$ and $p_i = 0$ for any $i \neq i^*$.
    Therefore, we need at least $m$ subsidy.
    As for the upper bound, 
    for any allocation $\bf{A}$,  we have $\max_{i\in N}v_i(A_i) \leq 0$ since all items are chores.
    In this case,  we need to pay money for agents except the agent whose bundle with the highest value, and the required minimum subsidy is at most $m$, which can guarantee that the sum of the value of each agent's bundle and the money that she receives equals to $\max_{i \in N}v_i(A_i)$. 
\end{proof}

\subsection{When the Allocation Can Be Chosen}

In this part, we are allowed to choose an allocation.
Note that computing the required minimum subsidy is NP-hard. 
That is because checking whether a zero subsidy is required is equivalent to checking whether EQ allocations exist, which is NP-hard for identical additive valuations when all items are goods \cite{bouveret2008efficiency}.
% Even though it is an NP-hard problem, it is possible to compute the minimum subsidy using integer linear programming. 
% For example, when agents have additive valuations, we can derive the following ILP.
% \begin{equation*}
% \begin{aligned}
%   &\min_{}\sum_{i = 1}^{N}p_i& \\
%   s.t.&~~ \sum_{j \in [m]}v_{i}(e_{j})x_{ij} + p_i = \max_{\ell \in N}\sum_{j \in [m]}v_{\ell}(e_{j})x_{\ell j}, \forall i \in N\\
%   &~~\sum_{i \in N}x_{ij}=1, \forall j \in [m]\\
%   &~~ x_{ij} \in \{0,1\}, \forall i \in N,j \in [m]\\
%   &~~p_i\geq 0, \forall i \in N
% \end{aligned}    
% \end{equation*}

Recall that when an allocation is given, the required subsidy in the worst case is $(n-1)m$ for the goods setting and $m$ for the chores setting. The following proposition demonstrates the lower bounds of the required subsidy if we can choose a desired allocation.

\begin{proposition}
\label{prop: lower_bound_subsidy_allocation_can_be_chosen}
For the goods or chores setting, there exists an instance where the minimum required subsidy is exactly $(n-1)$ to achieve EQ.
    For the mixed items setting, there exists an instance where the minimum required subsidy is exactly $\max \{n-1, m\}$ to achieve EQ.     
\end{proposition}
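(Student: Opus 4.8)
The plan is to prove each lower bound by exhibiting a concrete instance and then computing, via Lemma \ref{lem: equitable_minimum_payment}, the minimum subsidy over all allocations. Recall that the lemma pins down the optimal payment for a fixed allocation ${\bf A}$, so the required minimum subsidy for ${\bf A}$ is
\[
\sum_{i \in N} p_i^* \;=\; n\cdot\max_{j\in N} v_j(A_j) \;-\; \sum_{i\in N} v_i(A_i).
\]
For each instance I would argue that this quantity, minimized over all allocations, equals the claimed value; since in several of the instances \emph{every} allocation yields the same subsidy, the minimum is pinned down exactly, which also justifies the word ``exactly'' in the statement.

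For the goods setting I would take a single good ($m=1$) valued at $1$ by all $n$ agents. Any allocation hands this good to one agent, giving values $(1,0,\dots,0)$ up to permutation, so $\max_{j} v_j(A_j)=1$ and $\sum_i v_i(A_i)=1$, whence the subsidy is $n\cdot 1 - 1 = n-1$ for every allocation. For the chores setting I would take $m=n-1$ chores, each valued at $-1$ by all agents. Since there are fewer chores than agents, every allocation leaves at least one agent with the empty bundle, so $\max_{j} v_j(A_j)=0$ while $\sum_i v_i(A_i) = -(n-1)$, and the subsidy is $n\cdot 0 + (n-1) = n-1$ for every allocation. In both cases the minimum is exactly $n-1$.

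For the mixed setting I expect the argument to split into two regimes according to which term of $\max\{n-1,m\}$ dominates. When $m \le n-1$, I would use one universal good valued $+1$ by all agents together with $m-1$ null items (value $0$ to everyone); every allocation gives the holder of the good value $1$ and all others value $0$, so the subsidy is $n-1$ for every allocation, which equals $\max\{n-1,m\}$. When $m \ge n-1$, I would instead make every item a good for agent $1$ (value $+1$) but a chore for agents $2,\dots,n$ (value $-1$). Writing $a=|A_1|$, one computes $\max_{j} v_j(A_j)=a$ for $a>0$ and $0$ for $a=0$, together with $\sum_i v_i(A_i)=a-(m-a)=2a-m$, so the subsidy equals $(n-2)a+m$; since $n\ge 2$ this is minimized at $a=0$, giving exactly $m=\max\{n-1,m\}$. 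Combining the two regimes yields, for every $n$ and $m$, a mixed instance with minimum subsidy exactly $\max\{n-1,m\}$.

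The main obstacle is the mixed case, specifically the realization that no single construction simultaneously forces both the $n-1$ and the $m$ lower bounds. The key phenomenon is that subjective items --- those that are a good for one agent and a chore for others --- can always be ``dumped'' onto the agents who dislike them, so when $m<n-1$ such items can only force a subsidy of about $m$; forcing the full $n-1$ requires a genuinely universal source of positive value (the universal good) that cannot be dumped, which is why the regime split rather than one clever instance appears to be necessary. The remaining work is purely verifying the subsidy formula on the minimizing allocation in the subjective-item instance, i.e., confirming that $a=0$ is optimal and that $\max_{j} v_j(A_j)=0$ there.
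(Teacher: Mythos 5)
Your proof is correct and takes essentially the same route as the paper: the identical single-good and $(n-1)$-chores instances, the same regime split $m \le n-1$ versus $m \ge n-1$ for mixed items, and for the second regime the same instance in which agent $1$ values every item at $+1$ and all other agents at $-1$ (the paper's text misstates this instance as one universally-liked item plus universally-disliked ones, but its subsequent argument only works for your version, so yours is clearly the intended construction). Your explicit formula $n\cdot\max_j v_j(A_j)-\sum_i v_i(A_i)$ and the observation that every allocation attains the bound --- which is what justifies the word ``exactly'' --- are in fact slightly more careful than the paper's ``at least'' phrasing.
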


\begin{proof}
% \begin{table}[tb]
%     \centering
%     \begin{tabular}{c|c|c|c|c|c}
%     \toprule
%       & $e_1$ & $e_2$ & $\ldots$& $e_{m-1}$ & $e_m$  \\
%       \midrule
%         agent 1  & 1 & 1 & $\ldots$ & 1 & 1\\
%         \midrule
%         agent 2 & -1 & -1 & $\ldots$ &-1 & -1\\
%         \midrule
%         $\ldots$ & $\ldots$ & $\ldots$ & $\ldots$ & $\ldots$ &$\ldots$ \\
%         \midrule
%         agent $n$ & -1 & -1 & $\ldots$ &-1 & -1\\
%         \bottomrule
%     \end{tabular}   \caption{Example showing the lower bound of required minimum subsidy for the mixed items setting.}
%     \label{tab: subsidy_unbounded}
% \end{table}
    First, we consider the goods setting.
     Consider an instance with $n$ agents who have additive valuations and one item that holds a value of 1 for each agent.
     Now, let ${\bf A}$ be an arbitrary allocation ${\bf A}$, where agent $i^*$ receives this item. 
    By Lemma \ref{lem: equitable_minimum_payment}, the optimal payments are $p_{i^*} = 0$ and $p_i = 1$ for any $i \neq i^*$. 
    Therefore, the required minimum subsidy is at least $(n-1)$.
    
    Next, we consider the chores setting.
    Consider an instance with $n$ agents who have identical additive valuations and $(n-1)$ items, where each item has the value of -1 for any agent.  
    Consider an arbitrary allocation ${\bf A}$, let $N^*$ denote the set of agents whose bundles are not empty. 
    Note that we have $N \setminus N^* \neq \emptyset$ since the number of items is smaller than the number of agents.
    Thus, we only need to give money to agents from $N^*$, and the required minimum subsidy is at least $(n-1)$.
    
    At last, we consider the mixed items setting.
    If $m < n-1$, by the above instances where items are goods, we can derive that the lower bound is at least $(n-1)$.
    If $m \geq n-1$,
    we consider an instance where each agent has the additive valuation.
    For item $e_1$, all agents value it at 1, and for the other items $e \neq e_1$, all agents value it at -1.
    To minimize the required subsidy, agent 1's bundle should be empty.
    That is because she values all items positively, and the value of her bundle is the highest among all agents. Specifically, for arbitrary allocation $\textbf{A}$, we have $\max_{j\in N}v_i(A_i) = v_1(A_1)$.
    If she receives some items, by Lemma \ref{lem: equitable_minimum_payment}, she will receive a payment of $0$, but we need to allocate money to the other agents.
    Thus, in the allocation that minimizes the subsidy required, we should allocate items to the other agents except for agent 1, and the total subsidy required is at least $m$. Considering the above two cases, the claim holds. 
\end{proof}

Then, a natural question is whether we can find an allocation whose required minimum subsidy is at most $(n-1)$ when items are goods or chores, or $\max\{n-1,m\}$ when there is the mixed items setting.
We answer this question affirmatively for the goods or chores setting, and show that for the mixed items setting with additive valuations, the required minimum subsidy is at most $(m+n-2)$.
Before presenting our algorithm, we examine what is the upper bound of the required minimum subsidy for the special allocation that has already limited inequitability, such as an EQ1 allocation.
% Then, we get the following proposition.

\begin{proposition}
\label{prop: eq1_subsidy}
     For any EQ1 allocation ${\bf A}$, the required minimum subsidy is at most $(n-1)$ to achieve EQ.
\end{proposition}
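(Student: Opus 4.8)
The plan is to combine the explicit optimal payment formula from Lemma~\ref{lem: equitable_minimum_payment} with the slack guaranteed by EQ1 and the normalization $\lvert \Delta_i(S,e)\rvert \le 1$. By Lemma~\ref{lem: equitable_minimum_payment}, the cheapest payment vector is $p_i^\ast = V - v_i(A_i)$ where $V := \max_{j\in N} v_j(A_j)$, so the total minimum subsidy equals
\[
\sum_{i\in N} p_i^\ast = \sum_{i\in N}\bigl(V - v_i(A_i)\bigr).
\]
Fixing a maximizer $k \in \arg\max_{j} v_j(A_j)$, its term vanishes, leaving a sum of at most $n-1$ nonnegative terms, one per agent $i\neq k$. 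Hence it suffices to prove the single-agent bound $V - v_i(A_i) \le 1$ for every $i$; summing over the remaining $n-1$ agents then gives a total of at most $n-1$.

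To establish $V - v_i(A_i)\le 1$, I would apply the EQ1 condition (Definition~\ref{def: eq1}) to the ordered pair $(i,k)$: there is an item $e \in A_i \cup A_k$ with $v_i(A_i\setminus\{e\}) \ge v_k(A_k\setminus\{e\})$, and I split on the location of this witness $e$. If $e \in A_k$, then $v_i(A_i\setminus\{e\}) = v_i(A_i)$ while $v_k(A_k\setminus\{e\}) = V - \Delta_k(A_k\setminus\{e\},e) \ge V-1$, so $v_i(A_i) \ge V-1$ at once. If instead $e \in A_i$, then $v_k(A_k\setminus\{e\}) = V$, so EQ1 gives $v_i(A_i\setminus\{e\}) \ge V$; combining this with $v_i(A_i) \ge v_i(A_i\setminus\{e\}) - \lvert\Delta_i(A_i\setminus\{e\},e)\rvert \ge v_i(A_i\setminus\{e\}) - 1$ again yields $v_i(A_i)\ge V-1$. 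Either way $V - v_i(A_i) \le 1$, which is exactly the per-agent bound.

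The main obstacle is the degenerate case of empty bundles, where the pairwise EQ1 clause carries no information and the case analysis above does not directly apply. For an agent $i$ with $A_i = \emptyset$ I would argue directly on the pair $(i,k)$: since $v_i(\emptyset)=0$ may sit at most ``one item'' below the top value, removing a single item from $A_k$ must drop its value to at most $0$, which together with $\lvert\Delta_k\rvert\le 1$ forces $V\le 1$ and hence $V - v_i(A_i) = V \le 1$; if $A_k$ is itself empty then $V=0$ and every term is trivially bounded. Assembling the per-agent bounds over all $i\neq k$ then delivers the claimed total of at most $n-1$. The only genuinely delicate point is making sure empty-bundle agents are covered, which is precisely where one must lean on the ``up to one item'' guarantee rather than on a strict pairwise value comparison.
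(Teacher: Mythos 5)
Your main argument is exactly the paper's: invoke Lemma~\ref{lem: equitable_minimum_payment} to write the minimum total subsidy as $\sum_{i\neq k}\bigl(V - v_i(A_i)\bigr)$ with the maximizer $k$ paying zero, then bound each gap $V - v_i(A_i)$ by $1$ using the EQ1 witness, splitting on whether the witness item lies in $A_i$ or in $A_k$. The paper compresses that case split into a single line, but the content is identical, and your version of it is correct for every pair of nonempty bundles.

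The place where you go beyond the paper --- the empty-bundle case --- is also where your argument does not go through. Definition~\ref{def: eq1} only imposes a condition on pairs $i,j$ with $A_i, A_j \neq \emptyset$, so when $A_i = \emptyset$ there is no EQ1 clause relating $i$ to the maximizer $k$, and your claim that ``removing a single item from $A_k$ must drop its value to at most $0$'' has no basis in the stated definition; you are implicitly invoking a stronger variant of EQ1 that also constrains empty bundles. Under the literal Definition~\ref{def: eq1}, the allocation giving all $m$ identically-valued goods to a single agent is vacuously EQ1 yet requires subsidy $(n-1)m$, so the per-agent bound $V - v_i(A_i)\le 1$ genuinely fails for empty bundles and cannot be recovered by any local argument. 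To be fair, the paper's own proof has the same blind spot --- it fixes ``two arbitrary agents'' and applies the EQ1 inequality without addressing emptiness --- so you were right to single this out as the delicate point; but the patch you offer is an appeal to a definition the paper did not adopt, not a derivation from the one it did. The statement is salvageable either by strengthening Definition~\ref{def: eq1} to cover empty bundles (a property the allocations produced by Algorithm~\ref{alg: nonmonotone_subsidy} do in fact satisfy, since that proof tracks $\max_j v_j(A_j) - v_i(A_i)\le 1$ for all agents directly) or by restricting the proposition to allocations with no empty bundles.
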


\begin{proof}
    Consider an arbitrary EQ1 allocation ${\bf A}$. 
    By the definition of EQ1 (Definition \ref{def: eq1}), for any two agents $i,j \in N$ with $A_i, A_j \neq \emptyset$, we have $v_i(A_i \setminus \{e\}) \geq v_j(A_j \setminus \{e\})$ for some item $e \in A_i \cup A_j$.
    Fix two arbitrary agents $i, j \in N$,
    Since $\lvert \Delta_{\ell}(A_{\ell} \setminus \{e\},e)\rvert \leq 1$ for any $\ell \in N$ and $e \in A_\ell$, we get $v_j(A_j) - v_i(A_i) \leq 1$, which means that the value gap between any two bundles is at most one.
    By Lemma \ref{lem: equitable_minimum_payment} and the above property, the minimum payment for agent $i \in N$ is $\max_{j \in N}v_j(A_j)-v_i(A_i) \leq 1$.
    Note that the agent whose bundle has the largest value must receive zero payment.
    Therefore, the required minimum subsidy is at most $(n-1)$.
\end{proof}

\begin{algorithm}[tb]
\caption{Squeeze Algorithm}
\label{alg: nonmonotone_subsidy}
\KwIn{An instance $\mathcal{I} = \langle M, N, \boldsymbol{v} \rangle$ with the goods, chores, mixed item setting with objective or additive valuations.}
\KwOut{An allocation $\mathcal{{\bf A}}$ with bounded subsidy for equitability}

Let ${\bf A} = (\emptyset, \ldots, \emptyset)$ and $M^{\prime} = M$;

Let $i^{max} \in \arg \max_{i \in N}v_i(A_i)$ and $i^{min} \in \arg\min_{i \in N}v_i(A_i)$ (breaking ties arbitrarily);

Let $M^-_i = \{e | e\in M^{\prime}, \Delta_{i}(A_{i},e) \leq 0\}$ and 
$M^+_i = \{e | e\in M^{\prime}, \Delta_{i}(A_{i},e) \geq 0\}$;

\While{$M^{\prime} \neq \emptyset$}{
% Let $N^{+} = \{ i \in N   ~ \vert ~ v_i(A_i \cup \{e_j\}) - v_i(A_i) \geq 0    \}$

\uIf {$ \exists e \in M^{\prime}$ such that $\Delta_{i^{max}}(A_{i^{max}},e) \leq 0 $}
{

Let $e^{*} \in \arg\max_{e \in M^-_{i^{max}}}\Delta_{i^{max}}(A_{i^{max}},e)$ (breaking ties arbitrarily);

$A_{i^{max}}$ = $A_{i^{max}} \cup \{e^{*}\}$ and $M^{\prime} = M^{\prime} \setminus \{e^{*}\}$; \label{lin: agent_maximum_value_picks_item}
}
{
\uElseIf{$ \exists e \in M^{\prime}$ such that $\Delta_{i^{min}}(A_{i^{min}},e) \geq 0 $ }{

Let $e^{*} \in \arg\max_{ e \in M^+_{i^{min}}}\Delta_{i^{min}}(A_{i^{min}},e)$ (breaking ties arbitrarily);

$A_{i^{min}}$ = $A_{i^{min}} \cup \{e^{*}\}$ and $M^{\prime} = M^{\prime} \setminus \{e^{*}\}$; \label{lin: agent_minium_value_picks_item}

}
\Else{
break; \label{lin: break}
}
}

}

\uIf{$M^{\prime} \neq \emptyset$}{
$A_{i^{min}} = A_{i^{min}} \cup  M^{\prime} $;
}

\Return ${\bf A}$;

\end{algorithm}
 We design the following algorithm, named the Squeeze Algorithm, where the details of our algorithm can be found in Algorithm \ref{alg: nonmonotone_subsidy}.
The high-level idea of our algorithm is that in each iteration, we first attempt to allocate the item with the highest negative marginal value to the agent whose bundle currently has the highest value. This strategy is motivated by Lemma \ref{lem: equitable_minimum_payment}, which establishes that for agent $i$, the required minimum payment is $p_i = \max_{j \in N}v_j(A_j) - v_i(A_i)$. Intuitively, reducing $\max_{j \in N}v_j(A_j)$ helps us minimize the subsidy payments for all agents except for the one whose bundle has the highest value. If no item exhibits a negative marginal value for the agent whose bundle has the highest value, we then consider allocating the item that has the highest positive marginal value to the agent whose bundle has the minimum value, since increasing $v_i(A_i)$ can reduce the subsidy payment required for that agent.
Finally, if neither of the above strategies is feasible, we directly allocate the remaining unallocated items to the agent whose bundle has the lowest value to ensure the complete allocation of all items.
To handle the exponential input size efficiently, the valuations are presented to the algorithm as “valuation oracles” – black boxes that can be queried for the valuation of a set $S$, returning its valuation $v(S)$.
% For obtaining efficient algorithms despite the exponential size of the input, the valuations are presented to the algorithm as “valuation oracles” – black boxes that can be queried for the valuation of a set $S$, returning its valuation $v(S)$.

\begin{definition}[Objective Valuations \cite{barman2024nearly}]
 The valuation functions $\boldsymbol{v}$ are objective, if for each item $e \in M$, we have $ \Delta_{i}(S,e) \geq 0$  for any agent $i \in N$,  and any subset $S \subseteq M \setminus \{e\}$, or $\Delta_{i}(S,e) \leq 0 $  for any agent $i \in N$,  and any subset $S \subseteq M \setminus \{e\}$.

\end{definition}
% \textcolor{blue}{Objective valuation function is the special valuation class in the mixed items setting.}
\begin{theorem}
\label{the: nonomontone_subsidy}
For the goods, chores, or mixed items setting with objective valuations, Algorithm \ref{alg: nonmonotone_subsidy} computes an allocation that satisfies EQ1 and requires at most $(n-1)$ minimum subsidy to achieve EQ in polynomial time.
For the mixed items setting with additive valuations,
    Algorithm \ref{alg: nonmonotone_subsidy} computes an allocation that requires at most $(m+n-2)$ minimum subsidy to achieve EQ in polynomial time.
\end{theorem}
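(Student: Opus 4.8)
The plan is to prove three things about the output of Algorithm~\ref{alg: nonmonotone_subsidy}: that it runs in polynomial time, that it maintains a small value spread, and (in the objective case) that it is EQ1. Termination and the polynomial bound are immediate: the \textbf{while} loop deletes one item from $M'$ per iteration and the closing step empties $M'$, so there are at most $m$ iterations, each performing a constant number of $\arg\max$/$\arg\min$ computations over $N$ and $M'$ with $O(1)$ oracle calls. Throughout, I would convert value guarantees into subsidy guarantees via Lemma~\ref{lem: equitable_minimum_payment}: the minimum subsidy equals $\sum_{i\in N}\big(\max_{j}v_j(A_j)-v_i(A_i)\big)$, and the agent attaining the maximum contributes $0$.

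The technical core is a \emph{spread invariant}: after every iteration of the \textbf{while} loop, $\max_{i}v_i(A_i)-\min_{i}v_i(A_i)\le 1$. I would prove this by induction on the iterations, the base case being the empty allocation (spread $0$), and the inductive step using only the scaling assumption $\lvert\Delta_i(S,e)\rvert\le 1$. In an iteration of type~(\ref{lin: agent_maximum_value_picks_item}), $i^{max}$ receives an item of nonpositive marginal, so its value drops by at most $1$ and stays below the old maximum; a short case split on whether the new value of $i^{max}$ remains above the old minimum shows the new spread is still at most $1$. The type-(\ref{lin: agent_minium_value_picks_item}) iteration is symmetric, with $i^{min}$ gaining at most $1$. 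Hence, as long as the loop keeps running, the spread never exceeds $1$.

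For the objective case I would first argue that the \textbf{break} at line~\ref{lin: break} never fires while $M'\neq\emptyset$: by the definition of objective valuations each remaining item is either a good for every agent (so it has nonnegative marginal for $i^{min}$, triggering line~\ref{lin: agent_minium_value_picks_item}) or a chore for every agent (so it has nonpositive marginal for $i^{max}$, triggering line~\ref{lin: agent_maximum_value_picks_item}). Thus the loop runs until $M'=\emptyset$, the closing dump is vacuous, and the output has spread at most $1$; combined with Lemma~\ref{lem: equitable_minimum_payment} this yields subsidy $\le n-1$. For EQ1 I would maintain, alongside the spread invariant, the stronger invariant that the current partial allocation is EQ1, checking only the pairs involving the agent $c$ that just received an item (all other pairs are untouched). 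When $c$ receives a good as $i^{min}$, the pair ``$j$ over $c$'' is settled by removing that very good (which returns $c$ to its old, minimal value), and the pair ``$c$ over $j$'' follows because $c$'s value only increased; the type-(\ref{lin: agent_maximum_value_picks_item}) step is symmetric. I expect the genuinely delicate point to be the bookkeeping when an agent has received goods (while minimal) and chores (while maximal) at different times, so that its value is not monotone: there one must track the \emph{last} item added to the higher bundle, exploit that the receiver was extremal when it was added, and use the objective sign structure (a bundle of only goods has value $\ge 0$, of only chores $\le 0$) to dispose of the remaining mixed-sign pairs.

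For the mixed additive case the \textbf{break} can fire, and afterwards the algorithm dumps the $k=\lvert M'\rvert$ leftover items onto $i^{min}$; each such item has strictly negative marginal for $i^{min}$, so $i^{min}$'s value drops by at most $k$ while the maximum is unchanged. Writing the final subsidy as the break-time subsidy plus $i^{min}$'s extra loss, the plan is a two-part accounting: the break-time contribution of the $n-1$ agents other than $i^{min}$ is at most $n-2$ (the argmax contributes $0$ and the spread invariant caps each remaining term at $1$), while $i^{min}$'s total extra cost is at most $1+k$. The subtle step, where I must avoid an off-by-one, is to observe that whenever the spread is strictly positive at the break at least one item has already been allocated, so $k\le m-1$ and the two parts sum to $(n-2)+(1+(m-1))=m+n-2$; the degenerate cases (zero spread at the break, or no break at all) give the even smaller bounds $m$ and $n-1$. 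Combining these cases yields the claimed $(m+n-2)$ subsidy bound.
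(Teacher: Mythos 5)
Your proposal follows essentially the same route as the paper's proof: an inductive spread-at-most-one invariant (together with an EQ1 invariant) over the while-loop iterations, the observation that objective valuations prevent the break at line~\ref{lin: break} from firing while $M'\neq\emptyset$ (yielding the $n-1$ bound), and a final accounting of the leftover items dumped on $i^{\min}$ in the mixed additive case (yielding $m+n-2$). One minor point in your favor: you justify $\lvert M'\rvert\le m-1$ by noting that a strictly positive spread at break time forces at least one item to have been allocated, a step the paper asserts without argument and which is literally false if the break fires on the first iteration --- a degenerate case your separate bound of $m$ correctly covers.
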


\begin{proof}
We first show that when the while loop terminates,  we have $\max_{j \in N} v_j(A_j) - v_i(A_i) \leq 1 $ for any agent $i \in N$, and this partial allocation is EQ1.
We prove it by induction.
For the base case, the property trivially holds for the empty allocation.
From the induction hypothesis, the partial allocation $\textbf{A}^{k-1} = (A_1^{k-1}, \ldots, A_n^{k-1})$ obtained at the end of the $(j-1)$th iteration satisfies the desired property.
We will show that after the $j$th iteration, the property still holds for allocation $\textbf{A}^{k} = (A_1^{k}, \ldots, A_n^{k})$.
In the $j$th iteration, if the while loop is terminated in line \ref{lin: break}, the partial allocation remains the same, and the property trivially holds. 
If not, let us consider the following two cases.

\textbf{Case 1:} The agent whose bundle has the maximum value is chosen to pick a new item $e^*$ which has a non-positive marginal value in the line \ref{lin: agent_maximum_value_picks_item}. 
Note that, except for agent $i^{max}$'s bundle, other agents' bundles remain unchanged, i.e., $A_{i}^{k-1} = A_i^{k}$ for any $i \neq i^{max}$.
First, we show that  $\max_{j \in N} v_j(A_j^{k}) - v_i(A_i^{k}) \leq 1 $.
If the value of agent $i^{max}$'s new bundle is still the maximum one, it is trivial that the property still holds.
If not, suppose that agent $\ell$'s bundle has the largest value.
Then, we have $\max_{j \in N}v_j(A_j^{k}) - v_{i^{max}}(A_{i^{max}}^k) = v_{\ell}(A_{\ell}^{k})-v_{i^{max}}(A_{i^{max}}^{k-1} \cup \{e^*\}) \leq v_{\ell}(A_{\ell}^{k-1})-[v_{i^{max}}(A_{i^{max}}^{k-1})-1] \leq 1$, where the first inequality follows from that the marginal value of any item is at least -1.
Next, we show that the partial allocation $\textbf{A}^k$ is EQ1.
For agent $i^{max}$, we have $v_{i^{max}}(A_{i^{max}}^k \setminus \{e^*\}) \geq v_{i^{max}}(A_{i^{max}}^{k-1}) \geq v_i(A_i)$ for any $i \neq i^{max}$.
For any agent $i \neq i^{max}$, we have $v_i(A_{i}^{k-1} \setminus \{e\}) \geq v_{i^{max}}(A_{i^{max}}^{k-1} \setminus \{e\})$ for some $e \in A_i \cup A_{i^{max}}$ and $v_{i^{max}}(A_{i^{max}}^{k-1}) \geq v_{i^{max}}(A_{i^{max}}^{k})$.
Therefore, EQ1 still holds for $\textbf{A}^{k}$.

\textbf{Case 2:} The agent whose bundle has the minimum value is chosen to pick a new item $e^*$ which has nonnegative marginal value in the line \ref{lin: agent_minium_value_picks_item}.
Note that, except for agent $i^{min}$'s bundle, other agents' bundles remain unchanged, i.e., $A_{i}^{k-1} = A_i^{k}$ for any $i \neq i^{min}$.
First, we show that  $\max_{j \in N} v_j(A_j^{k}) - v_i(A_i^{k}) \leq 1 $.
If the value of agent $i^{min}$'s new bundle is the maximum one, i.e., $\max_{j \in N} v_j(A_j^{k}) = v_{i^{min}}(A_{i^{min}}^k)$, 
for any agent $i \neq i_{min}$, we have $\max_{j \in N} v_j(A_j^{k})-v_i(A_i^k) = v_{i^{min}}(A_{i^{min}}^k) - v_i(A_i^k) = v_{i^{min}}(A_{i^{min}}^{k-1} \cup \{e^*\}) - v_i(A_i^{k-1}) \leq v_{i^{min}}(A_{i^{min}}^{k-1})+1 - v_i(A_i^{k-1}) \leq 1$, where the third inequality holds for the fact that the marginal value of each item does not exceed 1.
If not, for agent $i^{min}$, we have $\max_{j \in N}v_j(A_j^k) - v_{i^{min}}(A_{i^{min}}^k) \leq \max_{j \in N}v_j(A_j^{k-1}) - v_{i^{min}}(A_{i^{min}}^{k-1}) \leq 1$ where the first inequality follows from that $\max_{j \in N}v_j(A_j^k) \geq v_{i^{min}}(A_{i^{min}}^k)\geq  v_{i^{min}}(A_{i^{min}}^{k-1})$.
Next, we show that EQ1 holds for $\textbf{A}^k$.
For agent $i^{min}$, we have $v_{i^{min}}(A_{i^{min}}^{k}) \geq v_{i^{min}}(A_{i^{min}}^{k-1})$ and $ v_{i^{min}}(A_{i^{min}}^{k-1} \setminus \{e\}) \geq v_i(A_{i}^{k-1} \setminus \{e\})$ for some $e \in A_{i^{min}} \cup A_i$.
For agent $i \neq i^{min}$, we have $v_i(A_i^{k-1}) \geq v_{i^{min}}(A_{i^{min}}^{k-1}) = v_{i^{min}}(A_{i^{min}}^{k} \setminus \{e^*\})$.
Thus, the partial allocation $\textbf{A}^k$ is still EQ1.

Overall, by induction, the properties still hold after the $j$th iteration, and the proof is established. 
For goods, chores, or mixed items setting with objective valuations, $M^{\prime}$ must be empty after the while loop terminates. 
That is because, by the definition of monotone valuations, or objective valuations, all agents value one item as a good or a chore. 
Thus, the total required minimum subsidy is at most $(n-1)$.
For mixed items setting with additive valuations, there are two cases.
If $M^{\prime}$ is empty, we can also conclude that the total subsidy required is at most $(n-1)$. 
If not, the remaining items are allocated to the agent whose bundle has the minimum value, and we need at most extra $\lvert M^{\prime} \rvert$ money to compensate for agent $i^{min}$.
Then, the total subsidy required is at most $n-1 + \lvert M^{\prime} \rvert \leq n-1+m-1=m+n-2$.
Therefore, combining the above two cases, it can be concluded that the total required minimum subsidy is at most $(m+n-2)$.
\end{proof}

\section{Bounding Subsidy for Both Equitable-Convertible and Envy-Freeable Allocations}
\label{sec: subsidy_eq_ef}
% In the above section, we only consider an allocation that needs the subsidy for equitability.
In this section, we study whether an allocation can satisfy EQ and EF with the same payment vector.
%Recall that there are some characterizations for an allocation that is envy-freeable \cite{halpern2019fair}.
There are well-established characterizations for allocations that are envy-freeable \cite{halpern2019fair}.
Because we consider two criteria, it is clear that there exist some allocations that cannot be EQ and EF simultaneously, even if we provide  some money to agents. 
First, we will characterize an allocation that is both equitable-convertible and envy-freeable with the same payment vector and present the following sufficient and necessary conditions.
\begin{lemma}
\label{the: characterization_ef_eq}
    For an allocation ${\bf A}$, the following statements are equivalent: (1) ${\bf A}$ is both equitable-convertible and envy-freeable with the same payment vector, and (2) $v_j(A_j) \geq v_i(A_j)$ for any two agents $i,j \in N$.
\end{lemma}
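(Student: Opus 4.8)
The plan is to establish the two implications separately, and the unifying observation throughout is that whenever a payment vector $\boldsymbol{p}$ renders $({\bf A}, \boldsymbol{p})$ equitable, every agent's adjusted utility equals a common constant $c := v_i(A_i) + p_i$, so that $p_j = c - v_j(A_j)$ for every $j$. Substituting this relation into the envy-freeness inequality makes the payment terms cancel and leaves precisely the condition in statement (2). Thus both implications boil down to the same algebraic cancellation.

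For the direction $(1) \Rightarrow (2)$, I would assume a single payment vector $\boldsymbol{p}$ making $({\bf A}, \boldsymbol{p})$ both EQ and EF. Writing $c = v_i(A_i) + p_i$ for the common equitable value gives $p_j = c - v_j(A_j)$ for all $j$. Plugging this into the envy-freeness requirement $v_i(A_i) + p_i \geq v_i(A_j) + p_j$ and using $v_i(A_i) + p_i = c$ yields $c \geq v_i(A_j) + c - v_j(A_j)$, which rearranges to $v_j(A_j) \geq v_i(A_j)$ for every pair $i,j$, i.e.\ exactly statement (2).

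For the direction $(2) \Rightarrow (1)$, I would exhibit an explicit payment vector rather than argue existence abstractly. I take the equitable payment from Lemma \ref{lem: equitable_minimum_payment}, namely $p_i^* = \max_{j \in N} v_j(A_j) - v_i(A_i)$; by that lemma $({\bf A}, \boldsymbol{p}^*)$ is EQ and each $p_i^* \geq 0$, so it is a legitimate subsidy vector. With $c = \max_{j \in N} v_j(A_j)$ we again have $p_j^* = c - v_j(A_j)$, so the envy-freeness inequality $v_i(A_i) + p_i^* \geq v_i(A_j) + p_j^*$ reduces, exactly as in the first direction, to $v_j(A_j) \geq v_i(A_j)$, which holds by the hypothesis (2). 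Hence the same vector $\boldsymbol{p}^*$ simultaneously witnesses EQ and EF, establishing (1).

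I do not anticipate a genuine obstacle, since the equivalence is essentially a one-line cancellation once the equitable payments are written out. The only point requiring care is that a valid subsidy vector must be nonnegative, and this is precisely why invoking the minimum equitable payment of Lemma \ref{lem: equitable_minimum_payment} in the $(2) \Rightarrow (1)$ direction is convenient: it guarantees $p_i^* \geq 0$ automatically while pinning the common value $c$ at its smallest feasible level, so no separate feasibility check is needed.
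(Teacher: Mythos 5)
Your proof is correct and follows essentially the same route as the paper: the forward direction combines the EQ equality with the EF inequality to cancel the payments, and the reverse direction exhibits the explicit payment vector $p_i^* = \max_{j \in N} v_j(A_j) - v_i(A_i)$ and verifies EF via the same cancellation. Your additional remark on nonnegativity of $\boldsymbol{p}^*$ is a small but welcome point of care that the paper leaves implicit.
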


\begin{proof}
First, we show that $(1) \Rightarrow (2)$. Suppose an allocation ${\bf A}$ is equitable-convertible and envy-freeable. In that case, it means that there exists a payment vector $\boldsymbol{p}$, such that for any two agents $i, j \in N$, we have $v_i(A_i) + p_i = v_j(A_j) + p_j$ and $v_i(A_i) +p_i \geq v_i(A_j) + p_j$.
Combining the above two inequalities together, we get $v_j(A_j) \geq v_i(A_j)$.

Next, we show that $(2) \Rightarrow (1)$. When the claim (2) is satisfied, it suffices to show that there exists a payment vector $\boldsymbol{p}$ to enable ${\bf A}$ to be both equitable-convertible and envy-freeable.
Let $p_i = \max_{k \in N}v_k(A_k) - v_i(A_i)$ for any agent $i \in N$.
Note that for any agent $i \in N$, we have $v_i(A_i) + p_i = \max_{k \in N}v_k(A_k)$, which means that ${\bf A}$ is equitable.
Then, fix any two agents $i, j \in N$, we have $v_i(A_i) + p_i = \max_{k \in N}v_k(A_k) \geq \max_{k\in N} v_k(A_k)+v_i(A_j) - v_j(A_j) = v_i(A_j) + p_j$, where the inequality follows from claim $(2)$. Thus, ${\bf A}$ is also envy-freeable.
\end{proof}

Next, we examine the required minimum subsidy if an equitable-convertible and envy-freeable allocation is given.
Recall from the proof of Proposition \ref{prop: allocation_given_eq_minmum_subsidy}, where we focus on the required minimum subsidy only for EQ in the worst case when items are goods or chores, 
two constructed allocations are actually both equitable-convertible and envy-freeable. 
That is because all agents have the same valuation $v$, and then $v_j(A_j) = v(A_j) = v_i(A_j)$ holds for any two agents $i, j \in N$, which is the requirement of an equitable-convertible and envy-freeable allocation in Theorem \ref{the: characterization_ef_eq}.  
Therefore, we derive the following corollary directly.
\begin{corollary}
\label{coro: susbsidy_ef_eq_worst_case}
  Given an equitable-convertible and envy-freeable allocation ${\bf A}$, in the worst case, to achieve both EQ and EF, the required minimum subsidy is $(n-1)m$ for the goods setting, or $m$ for the chores setting.   
\end{corollary}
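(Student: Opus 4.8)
The plan is to reduce this statement entirely to Proposition \ref{prop: allocation_given_eq_minmum_subsidy} by showing that, for an equitable-convertible and envy-freeable allocation, the minimum subsidy required to achieve both EQ and EF coincides with the minimum subsidy required to achieve EQ alone. The crucial observation is the direction $(2)\Rightarrow(1)$ in the proof of Lemma \ref{the: characterization_ef_eq}: there, the payment vector $p_i = \max_{k\in N}v_k(A_k)-v_i(A_i)$ is shown to make $({\bf A},\boldsymbol{p})$ simultaneously EQ and EF whenever ${\bf A}$ satisfies $v_j(A_j)\ge v_i(A_j)$ for all $i,j\in N$. But this payment vector is exactly the EQ-optimal one identified in Lemma \ref{lem: equitable_minimum_payment}. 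Hence, on any EC\&EFB allocation, the cheapest EQ payment already delivers EF for free, and the two minimization problems have the same optimal value.

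First I would establish the upper bound. Given any EC\&EFB allocation, the payment vector above achieves EQ and EF simultaneously, and by Lemma \ref{lem: equitable_minimum_payment} it minimizes the total subsidy among all EQ payments. Since any payment achieving both EQ and EF is in particular a feasible EQ payment, no cheaper option can exist; thus the minimum subsidy for EQ and EF equals the minimum EQ subsidy for that allocation. Applying the upper-bound half of Proposition \ref{prop: allocation_given_eq_minmum_subsidy} then yields at most $(n-1)m$ for the goods setting and at most $m$ for the chores setting.

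Next I would establish tightness. For this I would revisit the two worst-case instances built in the proof of Proposition \ref{prop: allocation_given_eq_minmum_subsidy}: identical unit-valued goods (resp.\ identical $-1$-valued chores) with every item assigned to a single agent. Because all agents share a common valuation $v$, we have $v_j(A_j)=v(A_j)=v_i(A_j)$ for every pair $i,j\in N$, so the condition of Lemma \ref{the: characterization_ef_eq} holds and each such allocation is indeed EC\&EFB. Since these allocations already require $(n-1)m$ (goods) or $m$ (chores) subsidy to reach EQ alone, and every EQ and EF payment is in particular an EQ payment, the same lower bound transfers to the joint EQ and EF objective, matching the upper bound.

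The main point to be careful about is the direction of the reduction: I must verify both that the worst-case instances of Proposition \ref{prop: allocation_given_eq_minmum_subsidy} genuinely satisfy the EC\&EFB characterization (so the lower bound transfers) and that the EQ-optimal payment is feasible for EF on every EC\&EFB allocation (so the upper bound transfers). Both are immediate consequences of Lemma \ref{the: characterization_ef_eq}, which is precisely why the result follows as a corollary rather than demanding a fresh construction; there is no genuinely new obstacle beyond correctly invoking the characterization in both directions.
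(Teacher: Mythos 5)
Your proposal is correct and follows essentially the same route as the paper: the paper likewise observes that the two worst-case allocations from Proposition \ref{prop: allocation_given_eq_minmum_subsidy} satisfy the condition $v_j(A_j)=v(A_j)=v_i(A_j)$ of Lemma \ref{the: characterization_ef_eq} because the agents have identical valuations, and then transfers both bounds directly. Your write-up is in fact slightly more explicit than the paper's (which simply asserts the corollary ``directly''), since you also spell out why the EQ-optimal payment of Lemma \ref{lem: equitable_minimum_payment} is simultaneously EF-feasible on any such allocation, making the upper-bound transfer airtight.
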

Lastly, we consider the scenario where the allocation can be chosen, and we present the following proposition to show the lower bound of the minimum subsidy to achieve both EQ and EF.
 
\begin{proposition}
\label{prop: subsidy_ef_eq_lower_bound}
 For the goods setting, there exists an instance where the required minimum subsidy is $(n-1)m$ to achieve both EQ and EF.
 For the chores setting, there exists an instance where the required minimum subsidy is  $m$ to achieve both EQ and EF. 
\end{proposition}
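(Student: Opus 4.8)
The plan is to exhibit, for each setting, a single instance together with the observation that \emph{every} allocation admitting a common EQ-and-EF payment vector is forced to incur the claimed subsidy. By Lemma~\ref{the: characterization_ef_eq} the admissible allocations are exactly those satisfying $v_j(A_j)\ge v_i(A_j)$ for all $i,j\in N$, and by Lemma~\ref{lem: equitable_minimum_payment} the subsidy of an admissible allocation $\mathbf{A}$ equals $\sum_{i\in N}\bigl(\max_{j}v_j(A_j)-v_i(A_i)\bigr)$. The key difference from the EQ-only lower bound in Proposition~\ref{prop: lower_bound_subsidy_allocation_can_be_chosen} is that the extra envy-freeness requirement, via the characterization, eliminates the cheap allocations and pins down which bundles each agent may receive.

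For the goods setting I would take $n$ agents and $m$ identical goods, but with \emph{asymmetric} valuations: agent $1$ has $v_1(e)=1$ for every item, while every other agent $i\ne 1$ has $v_i(e)=0$ for every item (all additive, and scaled so the maximum marginal value is $1$). First I would argue that the \emph{only} admissible allocation gives all items to agent $1$: if some agent $i\ne 1$ held a nonempty bundle $A_i$, then $v_i(A_i)=0$ while $v_1(A_i)=|A_i|\ge 1$, so $v_i(A_i)\ge v_1(A_i)$ fails and the characterization of Lemma~\ref{the: characterization_ef_eq} is violated. Hence $A_1=M$, and applying Lemma~\ref{lem: equitable_minimum_payment} the required subsidy is $\sum_{i\ne 1}\bigl(m-0\bigr)=(n-1)m$. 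Since this is the unique admissible allocation, the minimum subsidy over admissible allocations is exactly $(n-1)m$.

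For the chores setting I would instead use \emph{identical} valuations together with a pigeonhole argument: take $n$ agents and $m$ chores with $m<n$ (e.g.\ $m=n-1$), where every agent values each chore at $-1$. Because valuations are identical, $v_j(A_j)=v_i(A_j)$ holds for all $i,j$, so \emph{every} allocation is admissible. For any such allocation, since $m<n$ the pigeonhole principle guarantees an agent with an empty bundle, so $\max_j v_j(A_j)=0$; then by Lemma~\ref{lem: equitable_minimum_payment} the subsidy equals $\sum_{i\in N}\bigl(0-v_i(A_i)\bigr)=-\sum_{i\in N}v_i(A_i)=m$, independent of the allocation. Thus the minimum subsidy over admissible allocations is exactly $m$.

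The step I expect to be the main obstacle --- or at least the point needing the most care --- is verifying that the constructions rule out \emph{all} cheaper admissible allocations rather than merely admitting one expensive one. In the goods case this hinges on using a strict value gap (agent $1$ values each good strictly more than the others) so that the characterization forbids any item from leaving agent $1$; had I used identical goods valuations as in the EQ-only bound, a balanced allocation would be admissible and the subsidy would collapse to roughly $m/n$ per agent. In the chores case the care is in the fact that the subsidy is \emph{allocation-invariant} and equal to $m$ precisely because $m<n$ forces an empty (value-$0$, hence maximum) bundle; for $m\ge n$ one could spread the chores to make every bundle nonempty, lowering $\max_j v_j(A_j)$ below $0$ and the subsidy below $m$, which is why the instance must keep $m<n$.
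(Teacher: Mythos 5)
Your proposal is correct, and the two halves sit in different categories. The goods construction is essentially the paper's: the paper uses $v_1(e)=1$ and $v_i(e)=1-\epsilon$ for $i\neq 1$ with $0<\epsilon\le 1$, and your instance is exactly the case $\epsilon=1$; both force $A_1=M$ via the characterization in Lemma~\ref{the: characterization_ef_eq} and then read off $(n-1)m$ from Lemma~\ref{lem: equitable_minimum_payment}. The chores construction, however, is genuinely different. The paper keeps the asymmetric-valuation template ($v_1(e)=-1+\epsilon$ versus $v_i(e)=-1$), which again forces agent $1$ to take everything and yields subsidy $m(1-\epsilon)\rightarrow m$ only in the limit; your instance instead uses identical valuations with $m<n$ and a pigeonhole argument, under which \emph{every} allocation is admissible and the subsidy is allocation-invariant and \emph{exactly} $m$. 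What each buys: your version attains the bound exactly rather than as a limit, and its allocation-invariance makes the ``no cheaper admissible allocation'' step immediate; the paper's version works for \emph{arbitrary} $n$ and $m$, which matters because the proposition is meant to match the upper bound of Corollary~\ref{coro: susbsidy_ef_eq_worst_case} as a function of $m$, and --- as you yourself observe --- your identical-valuation instance collapses when $m\ge n$. So if the statement is read as ``for every $n$ and $m$ there is such an instance,'' your chores argument covers only the regime $m<n$ and you would need to fall back on something like the paper's $\epsilon$-perturbed instance for $m\ge n$; under the literal ``there exists an instance'' reading, your proof stands as is.
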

\begin{proof}
For the goods setting, consider an instance with $n$ agents who have additive valuations and $m$ items. 
For agent $1$, the value of each item is $1$.
For the remaining agents $i \neq 1$, the value of each item is $(1-\epsilon)$, where  $0<\epsilon \leq 1$.
By Theorem \ref{the: characterization_ef_eq}, in this instance, the only equitable-convertible and envy-freeable allocation ${\bf A}$ is that agent $1$ picks all items since for any item $e \in M$, we have $v_1(e) > v_i(e)$ for any $i \neq 1$. 
Then, by Lemma \ref{lem: equitable_minimum_payment}, it is not hard to see that the required minimum subsidy is $(n-1)m$.

For the chores setting, consider an instance with $n$ agents who have additive valuations and $m$ items. 
For agent $1$, the value of each item is $(-1+\epsilon)$, where $\epsilon\rightarrow 0$.
For the remaining agent $i \neq 1$, the value of each item is $-1$.
By Theorem \ref{the: characterization_ef_eq}, in this instance, the only equitable-convertible and envy-freeable allocation ${\bf A}$ is that agent $1$ picks all items since for any item $e \in M$, we have $v_1(e) > v_i(e)$ for any $i \neq 1$. 
Then, by Lemma \ref{lem: equitable_minimum_payment}, it is not hard to see that the required minimum subsidy is $m(1-\epsilon)\rightarrow m$.
\end{proof}

In contrast to providing the minimum subsidy only for equitability, this lower bound of minimum subsidy for equitability and envy-freeness is the same as when the allocation is specified. 
%This means that if one allocation meets the condition in Theorem \ref{the: characterization_ef_eq}, the minimum subsidy is ``tightly bounded''.
This indicates that if a specific allocation fulfills the condition outlined in Theorem \ref{the: characterization_ef_eq}, then the minimum subsidy is ``optimally bounded''.

\section{Bounded Subsidy for Fairness with Efficiency}

%In this section, we study whether an allocation that achieves fairness with an optimally bounded subsidy and guarantees some efficiency simultaneously exists in the setting of goods, such as utilitarian social welfare or Nash welfare.
In this section, we investigate whether an allocation can achieve fairness with an "optimally bounded" minimum subsidy while simultaneously guaranteeing some level of efficiency, such as utilitarian social welfare or Nash welfare.
We focus on the goods setting, where social welfare calculations exclude the subsidy payments made to agents.
% By Proposition \ref{prop: lower_bound_subsidy_allocation_can_be_chosen}, the lower bound of the required minimum subsidy is $(n-1)$.
% Thus, we would like to find an allocation whose required minimum subsidy is at most $(n-1)$.

\subsection{Equitability with Efficiency Guarantees}
In this part, we consider only one fairness notion - equitability.
By Proposition \ref{prop: lower_bound_subsidy_allocation_can_be_chosen}, the lower bound of the required minimum subsidy for equitability is $(n-1)$ when the allocation can be chosen.
Therefore, we only consider the allocation that needs a minimum subsidy of at most $(n-1)$.
However, we have a negative result to show that in some instances, we cannot find any allocation that needs at most $(n-1)$ minimum subsidy and guarantees some non-zero approximation of maximum utilitarian social welfare or maximum Nash welfare. 

\begin{proposition}
\label{the: subsidy_no_efficiency_scaled}
    For any $\alpha>0$, there exists a goods instance where any allocation whose required minimum subsidy is at most $(n-1)$ to achieve EQ cannot guarantee $\alpha$-max-USW or $\alpha$-max-NSW.
\end{proposition}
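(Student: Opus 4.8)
The plan is to exhibit a single parametrized goods instance in which one agent is ``dominant'' while the others barely value the items, so that concentrating the goods is simultaneously necessary for efficiency and ruinous for the subsidy budget. Concretely, I would take $n$ agents and $m$ goods under additive valuations, let agent $1$ value every item at $1$, and let every other agent value every item at a tiny $\epsilon>0$; the two parameters will be tuned to $\alpha$ at the end, with $m$ large and $\epsilon<(n-1)/m$. I want to stress up front that $\epsilon$ must be strictly positive: with $\epsilon=0$ the optimal $NSW$ is $0$ and the $NSW$ half of the statement becomes vacuous, so keeping the other agents' values positive (but negligible) is exactly what makes the maximum $NSW$ meaningfully large.

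The key structural step is to show that every allocation ${\bf A}$ whose minimum required subsidy is at most $(n-1)$ must give agent $1$ at most one good, i.e. $|A_1|\le 1$. I would derive this from Lemma \ref{lem: equitable_minimum_payment}: the minimum subsidy equals $\sum_i(\max_j v_j(A_j)-v_i(A_i)) = n\max_j v_j(A_j)-USW({\bf A})$. Since $\epsilon<(n-1)/m$ forces every other agent's bundle value to stay below $1$, agent $1$ attains the maximum whenever $|A_1|\ge 1$, and the budget inequality $(n-1)|A_1|-\epsilon(m-|A_1|)\le n-1$ rearranges to $|A_1|-1\le \epsilon m/(n-1)<1$, while the case $|A_1|=0$ is immediate. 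Hence $|A_1|\le 1$ for every admissible allocation.

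With this in hand the $USW$ half is routine: any admissible allocation has $USW=|A_1|+\epsilon(m-|A_1|)<1+(n-1)=n$, whereas assigning all goods to agent $1$ gives $USW=m$, so the ratio is below $n/m$, which is under $\alpha$ once $m>n/\alpha$. For $NSW$ I would first discard $|A_1|=0$ (a zero factor kills the geometric mean), and then for $|A_1|=1$ bound $\prod_i v_i(A_i)=\epsilon^{n-1}\prod_{i\ge 2}|A_i|$ by balancing the remaining goods via AM--GM, giving $NSW({\bf A})\le(\epsilon(m-1)/(n-1))^{(n-1)/n}$. Comparing with the balanced allocation realizing $\max_{\bf B}NSW({\bf B})\ge \epsilon^{(n-1)/n}(m/n)$ produces the ratio $((m-1)/(n-1))^{(n-1)/n}/(m/n)$, in which the powers of $\epsilon$ conveniently cancel.

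The main obstacle is precisely this $NSW$ estimate, because the low-subsidy constraint is not obviously in tension with $NSW$: the geometric mean actually rewards balance, which a small subsidy also encourages, so one might fear that admissible allocations are \emph{good} for $NSW$. The resolution is exponent bookkeeping: agent $1$'s large optimal value $m/n$ enters the geometric mean only through its $1/n$-th power, so the numerator above grows like $m^{(n-1)/n}$ while the denominator grows like $m$, and the ratio decays like $m^{-1/n}\to 0$. I would finish by choosing $m$ large enough that both $n/m$ and this $m^{-1/n}$ term (with their explicit constants) fall below $\alpha$, and then set $\epsilon<(n-1)/m$, so that the same instance simultaneously defeats $\alpha$-max-USW and $\alpha$-max-NSW.
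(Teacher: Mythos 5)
Your construction follows the same strategy as the paper's: a single ``dominant'' agent who values everything at $1$, a subsidy argument showing that any admissible allocation gives her at most one item, and then a welfare comparison. The one substantive difference is in how the other agents value items: the paper gives each agent $i\ge 2$ a single designated item worth $\epsilon$ (everything else worth $0$), which makes the NSW bound immediate ($NSW({\bf A})\le \epsilon^{(n-1)/n}$ versus $\bigl[(m-n+1)\epsilon^{n-1}\bigr]^{1/n}$ for the optimum), whereas your uniform-$\epsilon$ valuations require the AM--GM balancing step to bound $\prod_{i\ge 2}|A_i|$. Your version has the compensating virtue of making the key structural claim $|A_1|\le 1$ fully explicit via the identity ``subsidy $=n\max_j v_j(A_j)-USW({\bf A})$'' and the budget inequality, where the paper just asserts ``otherwise the required subsidy will exceed $(n-1)$.''

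One small slip to fix: $\epsilon<(n-1)/m$ does \emph{not} force every other agent's bundle value below $1$ (it only gives $\epsilon|A_i|\le \epsilon m<n-1$), so for $n\ge 3$ your justification that agent $1$ attains the maximum whenever $|A_1|\ge 1$ is not valid as written. Taking $\epsilon<1/m$ instead makes that sentence correct and costs nothing elsewhere (the budget inequality $|A_1|-1\le \epsilon m/(n-1)<1$ only improves). Alternatively, keep $\epsilon<(n-1)/m$ but add the case where some agent $j\ne 1$ attains the maximum: there $\epsilon|A_j|\ge|A_1|\ge 2$ forces a subsidy of at least $(n-1)\epsilon|A_j|-\epsilon m>2(n-1)-(n-1)=n-1$, so the conclusion $|A_1|\le 1$ survives either way. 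With that repair the argument is complete.
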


\begin{proof}
\begin{table}[tb]
    \centering
    \setlength{\tabcolsep}{1.5mm}{
    \begin{tabular}{c|c|c|c|c|c|c|c}
    \toprule
      & $e_1$ & $e_2$ & $\ldots$& $e_{n}$ & $e_{n+1}$ & $\ldots$  & $e_{m}$ \\
      \midrule
        agent 1 & 1 & 1 &  $\ldots$ & 1 & 1 & \ldots &1 \\
        \midrule
        agent 2 & $0$ & $\epsilon$ & $\ldots$ &$0$ & $0$ & \ldots & $0$\\
        \midrule
        $\ldots$ & $\ldots$ & $\ldots$ & $\ldots$ & $\ldots$ &$\ldots$  & $\ldots$ & $\ldots$ \\
        \midrule
        agent $n$ & $0$ & $0$ & $\ldots$ &$\epsilon$ & $0$ & \ldots& $0$\\
        \bottomrule
    \end{tabular}  
    }
    \caption{Example showing that any allocation with ``optimally bounded'' subsidy has an arbitrarily poor approximate maximum utilitarian social welfare or maximum Nash welfare.}
    \label{tab: subsidy_no_efficiency}
    \end{table}
    Let us consider the following instance with $n$ agents who have additive valuation functions and $m$ items, where $m \gg n$.
    The value of each item is shown in Table \ref{tab: subsidy_no_efficiency}.
    For agent 1, all items have the value of 1.
    For agent $i \geq 2$, only item $i$ has the value of $\epsilon$, and the remaining items have the value of 0, where $0<\epsilon<\frac{1}{n}$.

    First, let us consider the utilitarian social welfare.
    In this instance, the maximum utilitarian social welfare is $m$, which is achieved when all items are allocated to agent 1.
    In the maximum utilitarian social welfare allocation, the required minimum subsidy is $(n-1)m > n-1$.
    Then, consider any allocation ${\bf A}$ that needs at most $(n-1)$ subsidy.
    In allocation ${\bf A}$, either agent 1 picks one item or her bundle is empty.
    Otherwise, the required subsidy will exceed $(n-1)$.
    Thus, we have $SW({\bf A}) \leq 1+(n-1)\epsilon <2$.
    For any $\alpha>0$, let $m>\frac{2}{\alpha}$, we have $\frac{2}{m}<\alpha$.
    Thus, no feasible allocation whose required minimum subsidy is at most $(n-1)$ can guarantee $\alpha$-max-USW.

    Next, let us consider Nash welfare.
    In this instance, the maximum Nash welfare is $[(m-n+1)\epsilon^{n-1}]^\frac{1}{n}$, which is achieved when agent $i\geq 2$ picks the item $i$, and the remaining items are allocated to agent 1.
    In the maximum Nash welfare allocation, the required minimum subsidy is $(n-1)(m-n+1-\epsilon) > n-1$.
    Similarly, consider any allocation $A$ that needs at most $(n-1)$ subsidy.
    In allocation ${\bf A}$, either agent 1 picks one item or her bundle is empty. 
    Thus, we have $NSW({\bf A})\leq\epsilon^{\frac{n-1}{n}}$.
    For any $\alpha>0$, let $m>(\frac{1}{\alpha})^n+n-1$, we have $[\frac{\epsilon^{n-1}}{(m-n+1)\epsilon^{n-1}}]^{\frac{1}{n}} < \alpha$.
     Thus, no feasible allocation whose required minimum subsidy is at most $(n-1)$ can guarantee $\alpha$-max-NSW.
     % Therefore, from the above analysis, it is not hard to see that in such an instance, no allocation that satisfies the requirement of subsidy can guarantee some efficiency for utilitarian social welfare or Nash welfare. 
\end{proof}

The above theorem demonstrates that we cannot guarantee any efficiency in the scaled valuation setting, even for additive valuations.
Then, we turn our attention to the normalized valuation setting, where $v_i(M)=1$ for any $i \in N$.
This setting is commonly studied in the study of fairness and efficiency \cite{barman2020optimal,sun2023equitability,bu2025approximability}.
In the normalized valuation setting, any feasible allocation requires at most $(n-1)$ minimum subsidy to achieve EQ.
That follows from the fact that for any allocation $\textbf{A}$, we have $\max_{j \in N}v_j(A_j) - v_i(A_i) \leq 1$, since the total value of items for any agent is at most 1.
Furthermore, the lower bound of minimum subsidy $(n-1)$ still holds, as shown in the proof of Proposition \ref{prop: lower_bound_subsidy_allocation_can_be_chosen}, the constructed instance for the goods setting has the normalized valuations.
% Thus, we directly have the following observation.

% \begin{observation}
% \tianze{when agents have normalization valuations}, any allocation requires minimum subsidy at most $(n-1)$ to achieve EQ.  
% \end{observation}

Next, we investigate the following natural question.
In the normalized valuation setting, is it possible that we can find an allocation that satisfies ex-post fairness, e.g., EQ1, and guarantees some efficiency like utilitarian social welfare or Nash welfare, which achieves multiple targets? 
Here, we mainly consider the normalized subadditive valuation setting.

\paragraph{Utilitarian Social Welfare} We first study the utilitarian social welfare and present our approximation algorithm, the Balanced-Packing algorithm (Algorithm \ref{alg: eq1_subadditive_sw}).
Our algorithm consists of three main steps: (1) partition items into two categories - the high marginal value items and the low marginal value items; (2) allocate the high marginal value items to some agents who cannot receive additional items before step (3) and greedily fill up some ``empty bags'' for the remaining agents; (3) If unallocated items remain, we use Algorithm \ref{alg: nonmonotone_subsidy} to extend the partial allocation to a complete one while preserving EQ1; otherwise, we get the desired allocation.

\begin{theorem}
\label{the: usw_subadditive}
    For normalized subadditive valuations in the goods setting,  Algorithm \ref{alg: eq1_subadditive_sw} computes an EQ1 allocation with $O(\frac{1}{n})$-max-USW in polynomial time.
\end{theorem}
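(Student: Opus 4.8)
The plan is to establish three things about the Balanced-Packing algorithm separately: that it runs in polynomial time, that its output $\mathbf{A}$ is EQ1, and that $USW(\mathbf{A})$ is within a $\Theta(1/n)$ factor of the optimum. The efficiency claim rests on a cheap upper bound for the benchmark: since all items are goods, every $v_i$ is monotone, so $v_i(B_i) \le v_i(M) = 1$ for each bundle in any allocation $\mathbf{B}$, whence $\max_{\mathbf{B} \in \Pi_n(M)} USW(\mathbf{B}) \le n$. Consequently it suffices to prove that the returned allocation satisfies $USW(\mathbf{A}) = \Omega(1)$; combined with the benchmark bound this yields $USW(\mathbf{A}) \ge \frac{c}{n}\max_{\mathbf{B}} USW(\mathbf{B})$, i.e. the claimed $O(1/n)$-max-USW. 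So the entire efficiency analysis reduces to producing a constant lower bound on the welfare that survives the EQ1 constraint, and the polynomial-time claim reduces to noting that steps (1)--(3) each touch each item at most a polynomial number of times under value-oracle access.

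First I would fix the threshold $t = \Theta(1/n)$ separating high and low marginal value items in step (1), and maintain a loop invariant through step (2): after the high marginal value items are assigned to their locked agents and the empty bags are greedily filled with low items, every non-empty bundle has value at least $t$ and the values of any two non-empty bundles differ by at most one item's worth. Establishing this invariant is where subadditivity bites, since the marginal of a low item depends on the current contents of its bag; I would argue that the greedy ``fill the lowest bag'' rule adds at most one item past the threshold to any bag, and invoke the scaling assumption $\lvert \Delta_i(S,e)\rvert \le 1$ together with subadditivity to bound each overshoot. This produces an EQ1 partial allocation whose maximum-to-minimum bundle gap is at most one, which is exactly the precondition under which the induction in the proof of Theorem \ref{the: nonomontone_subsidy} can be initialized from the partial allocation rather than from the empty one. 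Applying that theorem, step (3) then extends the partial allocation to a complete one while preserving EQ1.

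For the welfare lower bound I would show that (number of serviced agents) $\times$ $t = \Omega(1)$: each locked agent contributes at least the value of its high item (which exceeds $t$ by monotonicity), and each agent whose bag is filled in step (2) contributes at least $t$, so $USW(\mathbf{A})$ is bounded below by the count of non-empty bundles times $t$. With $t = \Theta(1/n)$ it remains to show that a constant fraction of the $n$ bags are serviced, either because enough high-value items exist or because the residual low-value mass suffices to raise that many bags to the threshold. The complementary degenerate case, in which too little value remains to fill the bags, must be handled by observing that the high-value items then already carry $\Omega(1)$ welfare on their own, so in either branch $USW(\mathbf{A}) = \Omega(1)$.

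The main obstacle I anticipate is the EQ1 invariant under subadditive valuations: unlike the additive case one cannot compare bundles item by item, and the content-dependent marginals mean that keeping all non-empty bundles within a single item's value of one another requires a careful accounting of each greedy insertion. The second, coupled, difficulty is that EQ1 and high USW pull in opposite directions -- equitability forces the bundles to be balanced while utilitarian welfare rewards concentration -- so the threshold must be tuned small enough that balance is maintainable yet large enough that the serviced bundles sum to a constant. The final piece to verify is that the Squeeze extension in step (3) never enlarges the inherited gap beyond one item, so that Theorem \ref{the: nonomontone_subsidy} applies verbatim to the composed allocation.
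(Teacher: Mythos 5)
Your plan follows the paper's proof almost exactly: bound the benchmark by $n$ via normalization, reduce to showing the output has $\Omega(1)$ welfare, use a threshold of $\Theta(1/n)$ (the paper takes $\frac{1}{3n-3}$) to split phase~(1) and phase~(2), and hand the partial EQ1 allocation with gap at most one to the Squeeze algorithm, whose induction then applies from a non-empty base case. The structure, the case split (early exit when the high items already carry enough welfare versus the residual-mass argument), and the role of subadditivity are all the same as in the paper.

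One step in your sketch, as written, would not go through. You propose to bound each bag's overshoot past the threshold by ``the scaling assumption $\lvert \Delta_i(S,e)\rvert \le 1$ together with subadditivity.'' An overshoot of up to $1$ per bag is useless here: the residual-mass accounting needs $v_i(S_j) \le 2t$ for every still-unserved agent $i$ and every already-filled bag $S_j$, so that the $n$ bags together consume at most $2nt = \Theta(1)$ of agent $i$'s unit of total value and leave at least $t$ behind. The correct source of that bound is not the global marginal cap but the fact that every item surviving phase~(1) is worth less than $t$ to every agent still in $N^-$, so the last item added to a bag contributes less than $t$ (by subadditivity) on top of a sub-$t$ prefix. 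Relatedly, your accounting of what phase~(1) removes needs the tie-breaking rule that each high item goes to the agent in $N^-$ who values it most: this gives $\sum_{j\in N^+} v_i(B_j) \le USW(\mathbf{B}) < nt$ for unserved $i$, which is exactly what the early-exit condition fails to trigger; without it, an unserved agent could value the phase-(1) items at up to $\lvert N^+\rvert$, and the residual bound collapses. With those two corrections your outline matches the paper's argument.
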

\begin{proof}
First, we show that the allocation computed by Algorithm \ref{alg: eq1_subadditive_sw}  has the utilitarian social welfare at least $\frac{n}{3n-3}$.
Let $M^{\prime}$ denote the set of remaining items and $N^{-}$ denote the set of agents who have not picked any item.
For the first while loop in line \ref{line: first while loop}, the algorithm allocates the items with the high marginal value that is at least $\frac{1}{3n-3}$ to some agents.
If the marginal value of one item is at least $\frac{1}{3n-3}$ to more than one agent, the agent who has the largest valuation picks this item. 
When the first loop terminates, for any item $e \in M^{\prime}$, we have $v_i(e) < \frac{1}{3n-3}$ for any $i \in N^{-}$, and let $\bf{B}$ denote the partial allocation.
Then, the algorithm checks whether $USW(\bf{B}) \geq  \frac{n}{3n-3}$.
If it holds, we are done.
If not, it means that $N^{-} \neq \emptyset$.
Since only the agent who has the highest valuation is chosen in line \ref{line: highest_value}, we have $v_i(B_j) \leq v_j(B_j)$ for any $i \in N^{-}$ and $j \in N^{+}$.
Accordingly, we get $\sum_{j \in N^{+}}v_i(B_j) \leq USW (\bf{B}) < \frac{n}{3n-3}$.
Next, the algorithm continuously adds the item to the empty set $S$ until there exists some agent $i \in N^{-}$ such that $v_i(S) > \frac{1}{3n-3}$ and then $S$ is allocated to agent $i$.
Repeat this process until $N^{-} = \emptyset$.
In order to guarantee that the above procedure is executed successfully, we need to show that for any agent $i \in N^{-}$, there are enough items in $M^{\prime}$ to pick, i.e., the value of remaining items in $M^{\prime}$ for agent $i \in N^{-}$ is still at least $\frac{1}{3n-3}$.
Consider an arbitrary iteration starting with $N^{-} \neq \emptyset$.
Let $S_j$ denote the bundle that agent $j \in N \setminus (N^{+} \cup N^{-})$ receives.
Note that for any agent  $i \in N^{-}$ and  $j \in N \setminus (N^{+} \cup N^{-})$, it must hold that $v_i(S_j) \leq \frac{2}{3n-3}$ since at the time when agent $j$ receives a bundle $S_j$, before the last item $e_{\ell}$ is added to $S_j$, we have $v_i(S_j \setminus \{e_{\ell}\}) < \frac{1}{3n-3}$. Then, by subadditivity, we get $v_i(S_j) \leq v_i(S_j \setminus \{e_{\ell}\}) + v_i(e_{\ell}) < \frac{2}{3n-3}$.
Consequently, for agent $i$, there are two cases.

\textbf{Case 1:} If $N^{+} \neq \emptyset$, we have
\begin{equation*}
\begin{aligned}
v_i(M^{\prime})
& = v_i(M \setminus \bigcup_{j \in N^{+}}B_j \setminus \bigcup_{j \in N \setminus(N^{+}\cup N^{-})}S_j)\\
& \geq v_i(M) - v_i(\bigcup_{j \in N^{+}}B_j) - v_i(\bigcup_{j \in N \setminus(N^{+}\cup N^{-})}S_j)\\
& \geq v_i(M) - \sum_{j \in N^{+}}v_i(B_j) -\sum_{j \in N \setminus(N^{+}\cup N^{-})}v_i(S_j) \\
& > 1 - \frac{n}{3n-3} -\frac{2}{3n-3}(n - \lvert N^{+} \rvert -\lvert N^{-} \rvert)\\
& \geq \frac{1}{3n-3}.
\end{aligned}
\end{equation*}   
where the second and third inequalities follow from the subadditivity, and the last one holds since $(\lvert N^{+}\rvert  + \lvert N^{-}\rvert) \geq 2$.

\textbf{Case 2:} If $N^{+} = \emptyset$, we have
\begin{equation*}
\begin{aligned}
v_i(M^{\prime})
& = v_i(M \setminus \bigcup_{j \in N \setminus  N^{-}}S_j)\\
& \geq v_i(M) - v_i(\bigcup_{j \in N \setminus N^{-}}S_j)\\
& \geq v_i(M) -\sum_{j \in N \setminus  N^{-}}v_i(S_j) \\
& > 1  -\frac{2}{3n-3}(n -\lvert N^{-} \rvert)\\
& > \frac{1}{3} \geq \frac{1}{3n-3}.
\end{aligned}
\end{equation*} 
where the second and third inequalities follow from the subadditivity, and the last one holds since $  \lvert N^{-}\rvert \geq 1$.
Thus, every agent $i \in N^{-}$ receives a bundle whose value is at least $\frac{1}{3n-3}$. 
When $N^{-} = \emptyset$, the utilitarian social welfare of the partial allocation is at least $\frac{n}{3n-3}$.
In the following steps, the utilitarian social welfare cannot decrease, so the utilitarian social welfare of output allocation must be at least $\frac{n}{3n-3}$. 
Given an instance with normalized subadditive valuations, $\sum_{i \in N}v_i(M) = n$ is a trivial upper bound on the maximum utilitarian social welfare.
Therefore, the returned allocation has an $O(\frac{1}{n})$-fraction of maximum utilitarian social welfare.

Next, we prove that the output allocation is EQ1.
Let $\bf{B}$ still be the partial allocation when the first while loop terminates.
Note that $\lvert B_i \rvert = 1$ for any agent $i \in N^{+}$ and $B_i = \emptyset$ for any agent $i \in N^{-}$.
Thus, that partial allocation $\bf{B}$ is trivially EQ1.
If $USW(\bf{B}) \geq \frac{n}{3n-3}$, the partial allocation remains unchanged before line $\ref{line: second_while_loop}$.
If not, the algorithm proceeds to allocate items to agents in $N^{-}$.
Let $\bf{C}$ be the partial allocation when $N^{-} = \emptyset$ and we claim that the partial allocation $\bf{C}$ is also EQ1.
For any agent $i \in N$, when comparing to agent $j \in N^{+}$, it does not violate EQ1 since $\lvert C_i \rvert = 1$, and when comparing to agent $j \in N \setminus N^{+}$, we have $v_i(C_i) \geq \frac{1}{3n-3} > v_i(C_j \setminus \{e_{\ell}\})$ where $e_{\ell}$ is the last item added to $C_j$, which means that it still EQ1. 
Thus, the partial allocation $\bf{C}$ satisfies EQ1.
If $M^{\prime} = \emptyset$, the returned allocation is EQ1. 
If not, the algorithm follows Algorithm \ref{alg: nonmonotone_subsidy} to allocate the remaining items, i.e., every time the item is allocated to the agent whose bundle has the lowest value.
By the proof of Theorem \ref{the: nonomontone_subsidy}, it is not hard to check that, in this process, EQ1 is not violated.
Therefore, we can conclude that the returned allocation is EQ1.
\end{proof}

\begin{algorithm}[tb]
\caption{Balanced-Packing Algorithm}
\label{alg: eq1_subadditive_sw}
\KwIn{A goods instance $\mathcal{I} = \langle M, N, \boldsymbol{v} \rangle$ with normalized subadditive valuations}
\KwOut{An allocation $\mathcal{{\bf A}}$ with ``optimally bounded'' subsidy and $O(\frac{1}{n})$-max-USW }

Let ${\bf A} = (\emptyset, \ldots, \emptyset)$,  $M^{\prime} = M$, $N^{-} = N$ and $N^{+} = \emptyset$;

\While{$\exists e^* \in M^{\prime}$ such that $v_{i}(e^*) \geq \frac{1}{3n-3}$ for some $i \in N^{-}$ 
\label{line: first while loop}}{
% Let $N^{\prime}  = \{i~ |~ i \in N^{-}~ \textit{and}~ v_i(e^*) \geq \frac{1}{3n-3} \}$ and 
Let $i^{max} \in \arg\max_{i \in N^{-}}v_i(e^*)$; \label{line: highest_value}

$N^{+} = N^{+} \cup \{i^{max}\}$ and $N^{-} = N^{-} \setminus \{i^{max}\}$;

$A_{i^{max}} = \{e^*\}$ and $M^{\prime}  = M^{\prime} \setminus \{e^*\}$;
}

\uIf{$USW({\bf A}) < \frac{n}{3n-3}$}{

% Let $N^{-} = N \setminus N^{+}$;

\While{$N^{-}\neq \emptyset$}{
Let $S = \emptyset$;

\While{$\forall i \in N^{-}$, $v_i(S) < \frac{1}{3n-3}$}{
Pick an arbitrary item $e \in M^{\prime}$;

$S = S \cup \{e\}$ and $M^{'} = M^{'} \setminus \{e\}$; 

}

Let $N^{*} = \{i ~|~ i \in N^{-}~ \textit{and}~ v_i(S) \geq \frac{1}{3n-3} \}$ and choose an arbitrary agent $i \in N^{*}$;

$A_i = S$ and $N^{-} = N^{-} \setminus \{i\}$;
}
}

\uIf{$M^{\prime} \neq \emptyset$ \label{line: second_while_loop}}{

% Pick an arbitrary item $e \in M^{\prime}$ and let $i^{min} = \arg \min_{i \in N} v_i(A_i)$; 

% $A_{i^{min}}$ = $A_{i^{min}} \cup \{e\}$ and $M^{'} = M^{'} \setminus \{e\}$; 

Execute Algorithm \ref{alg: nonmonotone_subsidy} to extend the partial allocation ${\bf A}$ to a complete one;

}

\Return ${\bf A}$;

\end{algorithm}

Regarding the upper bound of the utilitarian social welfare that an EQ1 allocation has, \cite{sun2023equitability} shows that no EQ1 allocation achieves an $O(\frac{1}{n})$-max-USW when agents have normalized additive valuations.
Therefore, it can be concluded that our algorithm provides an asymptotically tight fraction of maximum utilitarian social welfare that EQ1 allocations guarantee when agents have normalized subadditive valuations.

\paragraph{Nash Welfare} We now turn attention to the Nash welfare for two agents.
Unlike envy-based notions for two agents, where classic techniques such as divide-and-choose are applicable, such approaches cannot be directly applied to EQ1 allocations.
% Consequently, we require a more nuanced case analysis for item allocation.
Consequently, we require a subtle case analysis for item allocation.
Our algorithm's intuition builds upon the framework established in Algorithm \ref{alg: eq1_subadditive_sw}. 
We first identify the items with high marginal values. 
If such an item exists, we allocate it first.
If not, we greedily add items in an empty bag until one agent thinks its value is sufficient,  and allocate it to her.
Finally, we execute Algorithm \ref{alg: nonmonotone_subsidy} to extend the partial allocation to a complete one.
The complete algorithm details are provided in Algorithm \ref{alg: eq1_nsw_computation}.

\begin{theorem}
\label{the: subadditive_two_agents_nash_welfare}
    For two agents with normalized subadditive valuations in the goods setting,  Algorithm \ref{alg: eq1_nsw_computation} computes an EQ1 allocation that has $\frac{1}{3}$-max-NSW in polynomial time.
\end{theorem}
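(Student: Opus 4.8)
The plan is to establish three facts about the output $\mathbf A=(A_1,A_2)$: polynomial running time, EQ1, and $NSW(\mathbf A)\ge\frac13\,\mathrm{OPT}$, where $\mathrm{OPT}=\max_{\mathbf B\in\Pi_2(M)}NSW(\mathbf B)$. Running time is immediate: the high-value phase assigns at most one item per agent, each bag-building step deletes an item from $M'$, and the final invocation of the Squeeze Algorithm (Algorithm~\ref{alg: nonmonotone_subsidy}) runs in polynomial time, with every test using $O(1)$ oracle queries. For EQ1, I would show that the partial allocation reached just before the Squeeze call is EQ1 --- each served agent holds either a single item of value at least $\frac13$ or a bag whose value only just crossed $\frac13$, so the two bundles differ in value by at most one item --- and then reuse the induction from the proof of Theorem~\ref{the: nonomontone_subsidy}, which shows that the Squeeze rule preserves EQ1. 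Because all items are goods, every marginal is nonnegative, so the Squeeze phase only raises each agent's value; hence it is enough to prove the welfare bound for the pre-Squeeze partial allocation.

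For the welfare guarantee, the decisive structural fact is normalization: $v_i(S)\le v_i(M)=1$ for all $S$, so $\mathrm{OPT}=\sqrt{v_1(B_1^*)v_2(B_2^*)}\le 1$, where $\mathbf B^*=(B_1^*,B_2^*)$ denotes an optimal allocation. Consequently it suffices to prove $v_1(A_1)\,v_2(A_2)\ge\frac19\,v_1(B_1^*)\,v_2(B_2^*)$; and in the favorable situation where the algorithm gives \emph{both} agents value at least $\frac13$, this is automatic since $NSW(\mathbf A)\ge\frac13\ge\frac13\,\mathrm{OPT}$. I would therefore split into a \emph{balanced regime}, where both agents can be pushed past the threshold $\frac13$, and a \emph{concentrated regime}, where they cannot. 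In the balanced regime the argument mirrors the proof of Theorem~\ref{the: usw_subadditive}: a bag given to one agent is worth less than $\frac23$ to the other (its value was below $\frac13$ before its last item, and subadditivity caps the final marginal), so after the first agent is served there is still enough mass in $M'$ for the second agent to reach $\frac13$; both values are then at least $\frac13$ and we are done.

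The concentrated regime is the crux, and I expect it to be the main obstacle. It is triggered exactly when some agent, say agent $2$, cannot reach $\frac13$ from the residual items; since $v_2(M)=1$, subadditivity gives $v_2(M\setminus A_1)\ge 1-v_2(A_1)$, which forces agent $1$'s bundle to be worth more than $\frac23$ to agent $2$ --- in the relevant subcase, a single contested ``super-item'' $e^*$ with $v_2(e^*)>\frac23$ (and likewise large for agent~$1$). Here a naive rule that hands $e^*$ to its top valuer can be arbitrarily bad (give $e^*$ to agent~$1$ while agent~$2$ owns nothing else of value, and the $NSW$ ratio tends to $0$); this is precisely why the algorithm's case analysis must route $e^*$ to the agent who has no adequate alternative. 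Assuming it does, I would certify the bound by conditioning on which agent holds $e^*$ in $\mathbf B^*$ and comparing the two partitions term by term: when $\mathbf B^*$ agrees with the algorithm on $e^*$, agent~$2$'s factor is at least $v_2(e^*)>\frac23$ against $v_2(B_2^*)\le1$, while agent~$1$'s residual bundle dominates $B_1^*\subseteq M\setminus\{e^*\}$ by monotonicity, so the product ratio already exceeds $\frac23$; when $\mathbf B^*$ assigns $e^*$ oppositely, I bound the loser's optimal value by $v_2(M\setminus\{e^*\})$ and absorb it into the $\frac19$ slack.

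The delicate points are (i) matching the algorithm's assignment of $e^*$ and of the residual items to the corresponding subcase of $\mathbf B^*$ so that the four bundle values line up cleanly, and (ii) using subadditivity to bound what the non-holder of $e^*$ can extract from $M\setminus\{e^*\}$, keeping each comparison within the required factor. These are exactly the steps that force the careful, algorithm-specific case distinction, and they are where the bulk of the work lies.
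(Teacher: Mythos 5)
Your running-time argument, your EQ1 argument (prove EQ1 for the pre-Squeeze partial allocation, then invoke the induction from Theorem~\ref{the: nonomontone_subsidy}), your use of normalization to reduce the welfare bound to ``push both agents past $\frac13$'', and your bag-filling analysis in the balanced regime all track the paper's proof (its Cases 1, 2 and 4). The genuine gap is in your concentrated regime, which is the paper's Case 3: a single item $e^*$ with $v_1(e^*)\ge\frac23$ and $v_2(e^*)\ge\frac23$. You correctly observe that handing $e^*$ to its top valuer can be arbitrarily bad, conclude that the algorithm ``must route $e^*$ to the agent who has no adequate alternative'', and then proceed ``assuming it does''. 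But Algorithm~\ref{alg: eq1_nsw_computation} does not do this: in that branch it gives $e^*$ to $\arg\max_i v_i(e^*)$ and the residual $M\setminus\{e^*\}$ to the other agent. So you end up analyzing a different (better) routing rule than the one you must certify, and your fallback for the case where $\mathbf{B}^*$ assigns $e^*$ oppositely does not close: the product ratio there is $\frac{v_1(e^*)\,v_2(M\setminus\{e^*\})}{v_1(B_1^*)\,v_2(B_2^*)}\ge\frac23\cdot\frac{v_2(M\setminus\{e^*\})}{v_1(M\setminus\{e^*\})}$, and for subadditive valuations the last fraction can be arbitrarily small --- take $v_1(S)=\min\{1,\ \mathbf{1}[e^*\in S]+|S\setminus\{e^*\}|/k\}$ and $v_2$ additive with $v_2(e^*)=1-\epsilon$, $v_2(f_j)=\epsilon/k$; then the algorithm outputs product $\epsilon$ while the optimum is $1-\epsilon$. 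There is no ``$\frac19$ slack'' to absorb this.

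What bridges this step in the paper is the assertion, after assuming $v_1(e^*)\ge v_2(e^*)$, that $v_1(M\setminus\{e^*\})\le v_2(M\setminus\{e^*\})$, i.e., that ordering agents by their value for $e^*$ reverses their ordering on the residual. That implication is exactly what would make ``give $e^*$ to the top valuer'' coincide with your ``give $e^*$ to the agent with no adequate alternative''; it holds for normalized additive valuations but not for general subadditive ones (the example above violates it). So your instinct about where the crux lies is right, but the decisive case is deferred to an assumption that contradicts the algorithm as written, and without either establishing that residual-reversal implication or changing the routing rule to compare $v_i(M\setminus\{e^*\})$ directly, the proposal does not constitute a proof of the theorem.
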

\begin{proof}
The algorithm is executed in two phases. 
In the first phase, we get a partial EQ1 allocation with some guarantees of Nash welfare.
In the second phase, the partial EQ1 allocation is extended to a complete EQ1 allocation by Algorithm \ref{alg: nonmonotone_subsidy}, and it is not hard to see that the Nash welfare does not decrease in this phase since we consider the goods setting.
Therefore, it suffices to show that after the first phase, the partial allocation is EQ1, and its Nash welfare is at least $\frac{1}{3}$-max-NSW.
Let us consider the following cases.

\textbf{Case 1.} If there exists two different items $e_1$ and $e_2$ such that $v_1(e_1) \geq\frac{2}{3}$ and $v_2(e_2) \geq \frac{2}{3}$, item $e_1$ is allocated agent 1 and item $e_2$ is allocated to agent 2.
It is trivial that this partial allocation is EQ1, and its Nash welfare is at least $\frac{2}{3}$-max-NSW.

\textbf{Case 2.} If there exists one item $e^*$ such that $v_i(e^*) \geq \frac{2}{3}$ for some $i \in [2]$.
Without loss of generality, we assume that agent 1 is such an agent.
Then, let us consider the following two subcases.
\begin{itemize}
    \item If $v_1(e^*) \geq v_2(M \setminus \{e^*\})$, item $e^*$ is allocated to agent 1 and $M \setminus \{e^*\}$ is allocated to agent 2.
    It is not hard to see that this partial allocation is EQ1.
    % Then, we need to argue that it guarantees a $\frac{1}{3}$-fraction of maximum Nash welfare.
    Since $v_2(e^*) \leq \frac{2}{3}$, we have $v_2(M \setminus \{e^*\}) \geq v_2(M) -v_2(e^*) \geq 1-\frac{2}{3} =  \frac{1}{3}$, where the first inequality follows from the subadditivity.
    Thus, it is clear that its Nash welfare is at least $\frac{\sqrt{2}}{3}$-max-NSW.

    \item If $v_1(e^*) < v_2(M \setminus \{e^*\})$, it is easy to find a subset $T \subseteq M \setminus \{e^*\}$ that satisfies $v_2(T) \geq v_1(e^*)$ and $v_2(T \setminus \{e\}) < v_1(e^*)$ for some $e \in T$, and then $T$ is allocated to agent 2.
    Then, this partial allocation is EQ1.
    In addition, we get $v_1(e^*) \geq \frac{2}{3}$ and $v_2(T) \geq v_1(e^*) \geq \frac{2}{3}$, so we can conclude that its Nash welfare is at least $\frac{2}{3}$-max-NSW.
\end{itemize}
    
\textbf{Case 3.} If there exists an item $e^*$ such that $v_1(e^*) \geq \frac{2}{3}$ and $v_2(e^*) \geq \frac{2}{3}$, item $e^*$ is allocated to the agent who has the highest valuation and the remaining items are allocated to the other agent.
It is easy to see that this partial allocation is EQ1.
Next, we bound the Nash welfare.
Without loss of generality, we assume that agent 1 has the highest valuation, i.e., $v_1(e^*) \geq v_2(e^*)$.
In that case, we have $v_1(M\setminus \{e^*\}) \leq v_2(M\setminus \{e^*\})$.
Let  $(A_1^*, A_2^*)$ denote the maximum Nash welfare allocation.
If $e^* \in A_1^*$, we have $v_1(e^*) \geq \frac{2}{3}v_1(A_1^*)$, where the inequality holds since $v_1(A_1^*) \leq 1$, and $v_2(M \setminus \{e^*\}) \geq v_2(A_2^*)$, where the inequality follows from the fact that $A_2^* \subseteq M \setminus \{e^*\}$.
If not, we have $v_1(e^*) \geq \frac{2}{3}v_2(A_2^*)$ since $v_2(A_2^*) \leq 1$, and $v_2(M \setminus \{e^*\}) \geq v_1(A_1^*)$ since $A_1^* \subseteq M \setminus \{e^*\}$ and $v_1(M\setminus \{e^*\}) \leq v_2(M\setminus \{e^*\})$.
Thus, the Nash welfare of this partial allocation is at least $\sqrt{\frac{2}{3}}$-max-NSW.
    
% If $v_2(e^*) > v_1(e^*)$, we have $v_2(M\setminus \{e^*\}) \leq v_1(M\setminus \{e^*\})$.
%     Item $e^*$ is allocated to agent 2, and the remaining items are allocated to agent 1.
%     It is easy to see that this partial allocation is also EQ1.
%     With a slight abuse of notion, let $(A_1^*, A_2^*)$ be the optimal Nash welfare allocation.
%     In this case, If $e^* \in A_2^*$, we have $v_2(e^*) \geq \frac{2}{3}v_2(A_1^*)$ since $v_2(A_2^*) \leq 1$, and $v_1(M \setminus \{e^*\}) \geq v_1(A_1^*)$ since $A_1^* \subsetneq M \setminus \{e^*\}$.
%     If not,  we have $v_2(e^*) \geq \frac{2}{3} v_1(A_1^*)$ since $v_1(A_1^*) \leq 1$ and $v_1(M \setminus \{e^*\}) \geq v_2(A_2^*)$ since $A_2^* \subsetneq M \setminus \{e^*\}$ and $v_2(M \setminus \{e^*\}) \leq v_1(M \setminus \{e^*\})$.
%     Thus,  this partial allocation also has a $\sqrt{\frac{2}{3}}$-fraction of max-NSW.

\textbf{Case 4.} If for any item $e \in M$, we have $v_i(e) < \frac{2}{3}$ for any $i \in [2]$. Then, we consider two subcases.
\begin{itemize}
    \item If there exists an item $e^* \in M$ such that $v_i(e^*) \geq \frac{1}{3}$ for some $i \in [2]$,
    without loss of generality, we assume that agent 1 is such an agent and item $e^*$ is allocated to agent 1.
    If $v_1(e^*) \geq v_2(M \setminus \{e^*\})$, $M \setminus \{e^*\}$ is allocated to agent 2.
    It is not hard to see that this partial allocation is EQ1.
    In addition, we have $v_2(M \setminus \{e^*\}) \geq v_2(M) -v_2(e^*) > 1-\frac{2}{3}= \frac{1}{3}$, where the inequality follows for the subadditivity.
    If $v_1(e^*) < v_2(M \setminus \{e^*\})$, it is easy to find $T \subseteq M \setminus \{e^*\}$ such that $v_2(T) \geq v_1(e^*)$ and $v_2(T \setminus \{e\}) < v_1(e^{*})$ for some $e \in T $, and then $T$ is allocated to agent 2.
    It is trivial that this partial allocation is EQ1 and $v_2(T)\geq v_1(e^*) \geq  \frac{1}{3}$.
   At last, in the above two cases, the Nash welfare is at least $\frac{1}{3}$-max-NSW.
    \item If for any item $e \in M$, we have $v_i(e) < \frac{1}{3}$ for any $i \in [2]$.
    Then, the algorithm adopts the greedy way to find two bundles, $A_1$ and $A_2$.
    It is clear that if the two bundles are successfully found, this partial allocation is EQ1, and its Nash welfare is at least $\frac{1}{3}$-max-NSW.
    It suffices to show that after one agent receives bundle $T$, the value of the remaining items for the other agent is at least $\frac{1}{3}$.
    Suppose that agent 2 is the second agent.
    When agent 1 picks $T$, for agent 2, we have $v_2(T)  \leq  v_2(T \setminus \{e_{\ell}\} ) + v_2(e_{\ell}) <  \frac{1}{3} +\frac{1}{3} = \frac{2}{3}$ where the inequality follows from the subadditivity.
    Therefore, we get $v_2(M \setminus T) \geq v_2(M) -v_2(T) \geq \frac{1}{3} $ where the inequality also follows from the subadditivity.
\end{itemize}
Combining the above cases together establishes the correctness of the algorithm.
\end{proof}

We further study the normalized additive valuation setting for Nash welfare with two agents, presenting an algorithm to compute an EQ1 allocation with a better approximation of maximum Nash welfare compared to the subadditive valuation setting.
The key distinction between this algorithm and the previous one lies in our approach: we begin with a maximum Nash welfare allocation and use the property that for additive valuations, such an allocation is always envy-free up to one item (EF1) \cite{caragiannis2019unreasonable}, which does not hold with the subadditive valuations. 
Similar to Algorithm \ref{alg: eq1_nsw_computation}, our approach requires a detailed case analysis, while maintaining the underlying intuition remains consistent: we first allocate the highest-valued item, followed by the remaining items.
The complete algorithm details are provided in Algorithm \ref{alg: eq1_nsw_existence}.

It is worth noting that for additive valuations, given any instance, there exists an EF1 allocation that maximizes Nash welfare.
% This implies that envy can be eliminated by hypothetically removing the item with the highest value from the envied bundle in the maximum Nash welfare allocation. 
Surprisingly, based on the instance constructed in the proof of Lemma \ref{the: subsidy_no_efficiency_scaled}, any EQ1 allocation can achieve an arbitrarily poor approximation of maximum Nash welfare, which is in sharp contrast to EF1.

\begin{theorem}
\label{the: additive_two_agents_nash_welfare_existence}
    For two agents with normalized additive valuations in the goods setting, Algorithm \ref{alg: eq1_nsw_existence} computes an EQ1 allocation that has $\frac{1}{\sqrt{2}}$-max-NSW.
\end{theorem}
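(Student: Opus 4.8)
The plan is to mirror the structure of the proof of Theorem~\ref{the: subadditive_two_agents_nash_welfare}, but to exploit the stronger structure afforded by additive valuations and, crucially, the fact (cited as \cite{caragiannis2019unreasonable}) that a maximum Nash welfare allocation is EF1. First I would fix a maximum-NSW allocation $(A_1^*, A_2^*)$ and let $\text{OPT} = NSW(A_1^*, A_2^*) = \sqrt{v_1(A_1^*) \cdot v_2(A_2^*)}$. Since the algorithm begins from such an allocation, the EF1 property gives, for each agent, that after removing a single item her own bundle is valued at least as much as the other's bundle; this is the lever that lets me lower-bound the value an agent can extract and hence control the approximation ratio. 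As in the previous theorem, the second phase (extending the partial allocation to a complete one via Algorithm~\ref{alg: nonmonotone_subsidy}) only adds goods, so it can never decrease Nash welfare and preserves EQ1 by the proof of Theorem~\ref{the: nonomontone_subsidy}; thus it suffices to analyze the partial allocation produced by the first phase and show it is EQ1 with $NSW \geq \tfrac{1}{\sqrt 2}\cdot\text{OPT}$.

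The core of the argument is a case analysis driven by whether some single item carries a large fraction of an agent's total value, exactly in the spirit of Cases~1--4 above. The high-value branch is handled by allocating the heavy item to the agent who prizes it and a greedily grown complementary bundle to the other agent, using additivity (rather than mere subadditivity) to get the sharper constant $\tfrac{1}{\sqrt 2}$ instead of $\tfrac13$. In the regime where no item is individually large, I would again grow bundles greedily until one agent's valuation crosses a threshold, and use additivity to argue that the leftover mass available to the second agent is bounded below; the key quantitative improvement is that with additive valuations the ``last item'' overshoot is controlled tightly, so the product $v_1(A_1)\cdot v_2(A_2)$ stays within a factor $\tfrac12$ of $v_1(A_1^*)\cdot v_2(A_2^*)$, yielding the $\tfrac{1}{\sqrt 2}$ ratio after taking the square root. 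Throughout, EQ1 is verified case by case: in each branch at least one of the two bundles is a singleton or becomes EQ1-comparable after deleting the single boundary item that triggered the threshold, so Definition~\ref{def: eq1} is satisfied.

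The main obstacle I anticipate is the bookkeeping in the ``no single large item'' case: I must simultaneously guarantee (i) that the greedy threshold-crossing bundle can actually be assembled from the remaining items, (ii) that the residual value for the second agent meets the $\tfrac{1}{\sqrt 2}$ bar, and (iii) that EQ1 holds after the boundary item is accounted for. Additivity makes each of these a clean arithmetic estimate, but aligning the thresholds so that all three hold \emph{simultaneously} in every subcase is where the delicacy lies. A secondary subtlety is making essential use of the EF1 property of the maximum-NSW starting allocation to relate the greedily constructed bundles back to $(A_1^*, A_2^*)$; without additivity this comparison would fail, which is precisely why the theorem is stated for the additive setting and achieves a better ratio than Theorem~\ref{the: subadditive_two_agents_nash_welfare}.
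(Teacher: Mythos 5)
Your skeleton matches the paper's: start from a maximum-NSW allocation, invoke its EF1 property, do a case analysis, trim a bundle at a threshold to force EQ1, and extend via Algorithm~\ref{alg: nonmonotone_subsidy} (which cannot decrease NSW for goods). But the entire quantitative content of the proof is left as an ``obstacle I anticipate'' rather than carried out, and the part you defer is precisely where the theorem lives. Concretely, the paper's first split is not on whether some item is individually large; it is on whether $v_2(A_2^*)\ge\tfrac12 v_1(A_1^*)$. In the easy branch one trims $A_1^*$ down to a set $T$ with $v_1(T)\ge v_2(A_2^*)$ and $v_1(T\setminus\{e\})<v_2(A_2^*)$, which immediately gives both EQ1 and the factor $\tfrac1{\sqrt2}$. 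In the other branch one first needs the observation that normalization plus NSW-optimality force $v_2(A_2^*)\le\tfrac12\le v_1(A_1^*)$ (if $v_1(A_1^*)<\tfrac12$ one could swap the bundles and strictly improve NSW); your sketch never establishes any absolute anchor like this, and without it the relative thresholds do not yield a bound against $v_1(A_1^*)v_2(A_2^*)$.

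The hardest subcase --- no item $e\in A_1^*$ with $v_1(e)\ge\tfrac12 v_1(A_1^*)$ --- is where your plan genuinely diverges and would fail as described. You propose to ``grow bundles greedily until one agent's valuation crosses a threshold,'' but you specify no threshold and give no argument that the resulting product is within a factor $2$ of $v_1(A_1^*)v_2(A_2^*)$; a greedy construction divorced from the optimal allocation has no reason to respect it. The paper instead takes $e_{max}\in\arg\max_{e\in A_1^*}v_2(e)$, moves it to agent~2's side, and uses EF1 of ${\bf A}^*$ to certify $v_2(A_2^*\cup\{e_{max}\})\ge\tfrac12\ge\tfrac12 v_1(A_1^*)$, while $v_1(A_1^*\setminus\{e_{max}\})\ge\tfrac12 v_1(A_1^*)$ because no single item of $A_1^*$ carries half its value for agent~1; one then trims whichever side is larger. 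This single swap-and-trim is the missing idea. A smaller but real slip: you state EF1 as ``after removing a single item \emph{her own} bundle is valued at least as much as the other's bundle''; EF1 removes an item from the \emph{envied} bundle, and the correct direction is exactly what makes the inequality $v_2(A_2^*\cup\{e_{max}\})\ge\tfrac12$ go through.
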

\begin{proof}
 The algorithm starts with a maximum Nash welfare allocation. 
 If it is EQ1, we are done. 
 If not, we consider several cases to compute a partial EQ1 allocation that guarantees $\frac{1}{\sqrt{2}}$-max-NSW.
 Then, this partial allocation can be extended to a complete EQ1 allocation by Algorithm \ref{alg: nonmonotone_subsidy}, and the Nash welfare does not decrease in this phase.
 Without loss of generality, we assume that $v_1(A_1^*) > v_2(A_2^*)$ and this allocation is not EQ1.

 \textbf{Case 1.} If $v_2(A_2^*) \geq \frac{1}{2}v_1(A_1^*)$, it is not hard to find a subset $T \subseteq A_1^*$ such that $v_1(T) \geq v_2(A_2^*)$ and $v_1(T \setminus \{e\}) < v_2(A_2^*)$ for some $e \in T$.
 Then, $T$ is allocated to agent 1, and $A_2^*$ is allocated to agent 2.
 This partial allocation is trivially EQ1.
 Since $v_1(T) \geq v_2(A_2^*) \geq \frac{1}{2}v_1(A_1^*)$ and $A_2^*$ keeps the same, this partial allocation has $\frac{1}{\sqrt{2}}$-max-NSW.

\textbf{Case 2.} If $v_2(A_2^*) < \frac{1}{2}v_1(A_1^*)$, we claim that $v_2(A_2^*) \leq \frac{1}{2} \leq v_1(A_1^*)$.
If $v_2(A_2^*) > \frac{1}{2}$, we get $v_1(A_1^*) > 1$, which is the contradiction.
If $v_1(A_1^*) < \frac{1}{2}$, $A_1^*$ is allocated to agent 2 and $A_2^*$ is allocated to agent 1, where this new allocation has a strictly better Nash welfare and contradicts that ${\bf A}^*$ is a maximum Nash welfare allocation.
Next, let us consider the following subcases.

\begin{itemize}
    \item If $v_1(A_1^*) \leq \frac{2}{3}$, we have $v_1(A_2^*) \geq \frac{1}{2}v_1(A_1^*)$.
    Then, it is easy to find a subset $T \subseteq A_1^*$ such that $v_2(T) \geq v_1(A_2^*)$ and $v_2(T \setminus \{e\}) < v_1(A_2^*)$ for some $e \in T$.
    Then, $T$ is allocated to agent 2, and $A_2^*$ is allocated to agent 1.
    It is clear that this partial allocation is EQ1.
    Since $v_2(T) \geq v_1(A_2^*) \geq \frac{1}{2}v_1(A_1^*) \geq v_2(A_2^*)$, this partial allocation has $\frac{1}{\sqrt{2}}$-max-NSW.
    \item If there exists an item $e^* \in A_1^*$ such that $v_1(e^*) \geq \frac{1}{2}v_1(A_1^*)$, item $e^*$ is allocated to agent 1.
    Consider two different cases about $v_2(M \setminus \{e^*\})$.
    If $v_2(M \setminus \{e^*\}) < v_1(e^*)$, $M \setminus \{e^*\}$ is allocated to agent 2.
    This allocation is EQ1 and has a $\frac{1}{\sqrt{2}}$-max-NSW.
    If $v_2(M \setminus \{e^*\}) \geq v_1(e^*)$, it is easy to find a subset $T \subseteq M \setminus \{e^*\}$ such that $v_2(T) \geq v_1(e^*)$ and $v_2(T \setminus \{e\}) < v_1(e^*)$ for some $e \in T$.
    Then, $T$ is allocated to agent 2, and this partial allocation is EQ1.
    Since $v_2(T) \geq v_1(e^*) \geq \frac{1}{2}v_1(A_1^*) \geq v_2(A_2^*)$, this partial allocation has $\frac{1}{\sqrt{2}}$-max-NSW.
    \item If for any item $e \in A_1^*$, we have $v_1(e) < \frac{1}{2}v_1(A_1^*)$.
    Then, let $e_{max}$ denote the largest value in $A_1^*$ to agent 2.
    Now, we get $v_1(A_1^* \setminus \{e_{max}\}) \geq \frac{1}{2}v_1(A_1^*)$. 
    Since ${\bf A}^*$ is EF1, we have $v_2(A_2^* \cup \{e_{max}\}) \geq \frac{1}{2} \geq \frac{1}{2}v_1(A_1^*)$.
    If $ v_1(A_1^* \setminus \{e_{max}\}) < v_2(A_2^* \cup \{e_{max}\})$,
    it is easy to find a subset $T \subseteq A_2^* \cup \{e_{max}\}$ such that $v_2(T) \geq v_1(A_1^* \setminus \{e_{max}\})$ and $v_2(T \setminus \{e\}) < v_1(A_1^* \setminus \{e_{max}\})$ for some $e \in T$.
    Then,  $A_1^* \setminus \{e_{max}\}$ is allocated to agent 1 and $T$ is allocated to agent 2.
    Since $v_2(T) \geq v_1(A_1^* \setminus \{e_{max}\}) \geq \frac{1}{2}v_1(A_1^*) \geq v_2(A_2^*)$, this partial allocation has $\frac{1}{\sqrt{2}}$-max-NSW.  
    If $ v_1(A_1^* \setminus \{e_{max}\}) \geq v_2(A_2^* \cup \{e_{max}\})$,
    it is easy to find a subset $T \subseteq A_1^* \setminus \{e_{max}\}$ such that $v_1(T) \geq v_2(A_2^* \cup \{e_{max}\})$ and $v_1(T \setminus \{e\}) < v_2(A_2^* \cup \{e_{max}\})$ for some $e \in T$.
    Then,  $T$ is allocated to agent 1 and $A_2^* \cup \{e_{max}\}$ is allocated to agent 2.
    This partial allocation is trivially EQ1, and it has $\frac{1}{\sqrt{2}}$-max-NSW.  
\end{itemize}
Combining the above cases together establishes the correctness of the algorithm.
 \end{proof}

Finally, we give the upper bound of the fraction of the maximum Nash welfare that an EQ1 allocation guarantees for two agents with normalized additive valuations to complement the above results.

\begin{proposition}
\label{pro: upper_bound_nash_two_agents}
    In the goods setting, there exists an instance where no EQ1 allocation has  $\frac{\sqrt{3}}{2}$-max-NSW for two agents with normalized additive valuations.
\end{proposition}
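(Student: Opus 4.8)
The plan is to exhibit a small explicit instance and certify it by enumerating all allocations. I will work with three goods $e_1,e_2,e_3$ and two normalized additive valuations of the form $v_1=(\tfrac12,\tfrac12,0)$ and $v_2=(b,b,1-2b)$, and I expect the governing instance to be the threshold case $b\downarrow\tfrac14$ (i.e.\ $v_2$ close to $(\tfrac14,\tfrac14,\tfrac12)$). First I would pin down the Nash-welfare optimum. Since only $2^3-2=6$ allocations have both bundles nonempty, I can list the achievable value pairs $(v_1(A_1),v_2(A_2))$ and their products and check that the maximum product is attained by $A_1=\{e_1,e_2\},\,A_2=\{e_3\}$, with value pair $(1,\,1-2b)\to(1,\tfrac12)$, so that $\mathrm{OPT}=\sqrt{1/2}=\tfrac{1}{\sqrt2}$; all other allocations give strictly smaller Nash welfare, the runner-up being $\{e_1\}\mid\{e_2,e_3\}$ (and its twin) with pair $(\tfrac12,\tfrac34)$ and product $\tfrac38$.

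The second step is to argue that the unique NSW-maximizer $\{e_1,e_2\}\mid\{e_3\}$ is \emph{not} EQ1, so that the best EQ1 allocation must be found among the remaining ones. For this allocation agent $1$ leads with gap $1-(1-2b)=2b>\tfrac12$ while agent $1$'s largest good is worth only $\tfrac12$; by the two-agent form of EQ1 (the leader may exceed the laggard by at most the leader's largest good) the allocation violates EQ1 once $b>\tfrac14$. This is the delicate point: at $b=\tfrac14$ the gap equals $\tfrac12$ and the allocation sits exactly on the EQ1 boundary, so the strict inequality in the defining condition must be used carefully, and $b$ must be taken strictly above $\tfrac14$ while keeping this allocation the NSW optimum.

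Third, I would enumerate the EQ1-eligible allocations using the singleton-versus-pair characterization. The only candidates with positive Nash welfare are $\{e_1\}\mid\{e_2,e_3\}$ and $\{e_2\}\mid\{e_1,e_3\}$, each with pair $(\tfrac12,\,b+1-2b)=(\tfrac12,\tfrac34)$ and product $\tfrac38$, together with the strictly worse $\{e_i,e_3\}\mid\{e_j\}$ type. Hence the best EQ1 allocation has Nash welfare $\sqrt{3/8}=\tfrac{\sqrt6}{4}=\tfrac{\sqrt3}{2}\cdot\tfrac{1}{\sqrt2}=\tfrac{\sqrt3}{2}\,\mathrm{OPT}$ in the limit, and every other EQ1 allocation lies below it. Combining the three steps gives that no EQ1 allocation reaches $\tfrac{\sqrt3}{2}\,\mathrm{OPT}$.

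The main obstacle I anticipate is exactly the boundary/tie at the maximizer. In the clean instance the NSW-optimal allocation is only weakly EQ1-violating, so converting the limiting ratio $\tfrac{\sqrt3}{2}$ into a single instance with the best EQ1 allocation \emph{strictly} below $\tfrac{\sqrt3}{2}\,\mathrm{OPT}$ is subtle: perturbing $b$ upward makes the maximizer a strict EQ1 violator but simultaneously raises the value agent $2$ collects in $\{e_1\}\mid\{e_2,e_3\}$, so one must verify simultaneously that (i) $\{e_1,e_2\}\mid\{e_3\}$ remains the NSW optimum, (ii) it becomes a strict EQ1 violator, and (iii) the best surviving EQ1 allocation still sits at or below the $\tfrac{\sqrt3}{2}$ threshold, possibly at the cost of enlarging the instance (e.g.\ splitting $e_3$ or adding a good to create a ``diagonal gap'' that no near-equitable allocation can fill). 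Controlling these three conditions at once is the crux; once a concrete witnessing instance is fixed, the remaining verification is a finite, routine case check over all allocations.
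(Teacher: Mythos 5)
Your proposal is essentially the paper's own proof: the paper uses exactly this instance with $v_1=(\tfrac12,\tfrac12,0)$ and $v_2=(\tfrac13-\epsilon,\tfrac13-\epsilon,\tfrac13+2\epsilon)$ for $0<\epsilon<\tfrac1{12}$, which is your family under the substitution $b=\tfrac13-\epsilon$ (so $b\downarrow\tfrac14$ is $\epsilon\uparrow\tfrac1{12}$), and it runs the same three steps: identify $\{e_1,e_2\}\mid\{e_3\}$ as the NSW optimum, observe it is not EQ1, and enumerate the EQ1 allocations to find that the best one, $\{e_1\}\mid\{e_2,e_3\}$, yields the ratio $\sqrt{\tfrac{1/3+\epsilon/2}{1/3+2\epsilon}}$. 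The boundary obstacle you flag is real --- for every fixed $b>\tfrac14$ this ratio is \emph{strictly above} $\tfrac{\sqrt3}{2}$, and at $b=\tfrac14$ the maximizer becomes EQ1 --- but you should know the paper does not overcome it either: its proof simply takes $\lim_{\epsilon\to 1/12}$ and reads the proposition asymptotically (for every $\delta>0$ there is an instance where no EQ1 allocation is $(\tfrac{\sqrt3}{2}+\delta)$-max-NSW), so your construction proves exactly as much as the paper's does, and your suggested repairs (enlarging the instance, splitting $e_3$) are not needed to match it. One small slip: increasing $b$ does not raise agent $2$'s value in $\{e_1\}\mid\{e_2,e_3\}$ (that value is $1-b$, which falls); the ratio nevertheless moves away from $\tfrac{\sqrt3}{2}$ because the optimum $1-2b$ falls faster, so your qualitative conclusion stands even though the stated reason is reversed.
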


\begin{proof}
 Consider the following instance with two agents and three items shown in Table \ref{tab: eq1_nsw_two_agents}. 
 It is trivial that in the maximum Nash welfare allocation, agent 1 picks items $e_1$ and $e_2$, and agent 2 picks item $e_3$.
 The maximum Nash welfare is $\sqrt{\frac{1}{3}+2\epsilon}$.
 However, this allocation is not EQ1.
 Thus, in any EQ1 allocation, agent 1 cannot receive both items $e_1$ and $e_2$.
 Consider an EQ1 allocation ${\bf A}$, item $e_1$ is allocated to agent 1, and the remaining items are allocated to agent 2.
The Nash welfare of this allocation is $\sqrt{\frac{1}{3}+\frac{1}{2}\epsilon}$.
 Therefore, the worst ratio between Nash welfare that EQ1 allocations has and maximum Nash welfare is $\lim_{\epsilon \rightarrow \frac{1}{12}}\sqrt{\frac{\frac{1}{3}+\frac{1}{2}\epsilon}{\frac{1}{3}+2\epsilon}} = \frac{\sqrt{3}}{2}$.
 \begin{table}[tb]
    \centering
    \begin{tabular}{c|c|c|c}
    \toprule
      & $e_1$ & $e_2$ & $e_3$\\
      \midrule
        agent 1  & $\frac{1}{2}$ & $\frac{1}{2}$ & 0 \\
        \midrule
        agent 2 & $\frac{1}{3}-\epsilon$ & $\frac{1}{3}-\epsilon$&  $\frac{1}{3}+2\epsilon$\\
        \bottomrule
    \end{tabular}   \caption{Example showing that the upper bound of approximate Nash welfare that an EQ1 allocation has, where $0<\epsilon<\frac{1}{12}$.}
    \label{tab: eq1_nsw_two_agents}
\end{table}
 \end{proof}

 \begin{algorithm}[!]
\caption{Two agents with subadditive valuations}
\label{alg: eq1_nsw_computation}
\KwIn{A goods instance $\mathcal{I} = \langle M, 2, \boldsymbol{v} \rangle$ with normalized subadditive valuations}
\KwOut{An EQ1 allocation $\mathcal{{\bf A}}$ with the ``optimally bounded'' subsidy and $\frac{1}{3}$-max-NSW }

Let ${\bf A} = (\emptyset, \emptyset)$;

\uIf{$\exists  e_1, e_2 \in M $ such that $v_1(e_1) \geq \frac{2}{3}$ and $v_2(e_2) \geq \frac{2}{3}$}{

$A_1 = \{e_1\}$ and $A_2 = \{e_2\}$;

}
\uElseIf{$\exists e^* \in M$, $i \in [2]$ such that $v_i(e^*) \geq \frac{2}{3}$ }{
Assume that $v_1(e^*) \geq \frac{2}{3}$;

\eIf{$v_1(e^*) \geq v_2(M \setminus \{e^*\})$}{
$A_1 = \{e^*\}$ and $A_2 = M \setminus \{e^*\}$;}
{
Find a subset $T \subseteq M \setminus \{e^*\}$ such that $v_2(T) \geq v_1(e^*)$ and $v_2(T \setminus \{e\}) \leq v_1(e^*)$ for some $e \in t$;

$A_1 = \{e^*\}$ and $A_2 = T$;
}
}
\uElseIf{$\exists e^* \in M$ such that $v_i(e^*) \geq \frac{2}{3}$ for any $i \in [2]$}{

\eIf{$v_1(e^*) \geq v_2(e^*)$}{

$A_1 = \{e^*\}$ and $A_2 = M \setminus  \{e^*\}$;

}
{\{$A_1 = M \setminus  \{e^*\}$  and $A_2 =  \{e^*\}$;}

}

\Else{

\eIf{$\exists e^{*} \in M$ such that $v_i(e^{*}) \geq \frac{1}{3}$ for some $i \in [2]$ }{

Assume that $v_1(e^*) \geq \frac{1}{3}$;

\eIf{$v_1(e^*) \geq v_2(M \setminus \{e^*\})$}{
$A_1 = \{e^*\}$ and $A_2 = M \setminus \{e^*\}$;
}
{
Find a subset $T \subseteq M \setminus \{e^{*}\}$ such that $v_2(T) \geq v_1(e^{*})$ and $v_2(T \setminus \{e\}) < v_1(e^{*})$ for some $e \in T$;

$A_1 = \{e^{*}\}$ and $A_2 = T$;
}

}
{

Let $N^{-} = N$ and $M^{\prime} = M$;

\While{$N^{-}\neq \emptyset$}{
Let $T = \emptyset$;

\While{$\forall i \in N^{-}$, $v_i(T) < \frac{1}{3}$}{
Pick an arbitrary item $e \in M^{\prime}$;

$T = T \cup \{e\}$ and $M^{'} = M^{'} \setminus \{e\}$; 

}

Let $N^{*} = \{i \in N^{-}| v_i(T) \geq \frac{1}{3} \}$ and choose an arbitrary agent $i \in N^{-}$;

$A_i = T$ and $N^{-} = N^{-} \setminus \{i\}$;
}

% Let $M^{\prime} = M$ and $N = N^{\prime}$;

% \While{$N^{\prime}\neq \emptyset$}{
% Let $S = \emptyset$;

% \While{$\forall i \in N^{\prime}$, $v_i(S) < \frac{1}{3}$}{
% Pick an arbitrary item $e \in M^{\prime}$;

% $S = S \cup \{e\}$ and $M^{'} = M^{'} \setminus \{e\}$; 

% }

% Let $N^{*}$ denote the set of agents $i \in N^{\prime}$ with $v_i(S) \geq \frac{1}{3}$ and pick an arbitrary agent $i \in N^{*}$;

% $A_i = S$ and $N^{\prime} = N^{\prime} \setminus \{i\}$;
% }

}

}

\uIf{$M \setminus (A_1 \cup A_2) \neq \emptyset$}{
Execute Algorithm \ref{alg: nonmonotone_subsidy} to extend the partial allocation ${\bf A}$ to a complete one;
}

\Return ${\bf A}$;

\end{algorithm}

% \subsection{Missing Algorithm \ref{alg: eq1_nsw_existence}}

\begin{algorithm}[!]
\caption{Two agents with additive valuations}
\label{alg: eq1_nsw_existence}
\KwIn{A goods instance $\mathcal{I} = \langle M, 2, \boldsymbol{v} \rangle$ with normalized additive valuations, and a maximum Nash welfare allocation ${\bf A}^* = (A_1^{*}, A_{2}^{*})$.}
\KwOut{An EQ1 allocation $\mathcal{{\bf A}}$ with the ``optimally bounded'' subsidy and $ \frac{1}{\sqrt{2}}$-max-NSW}

Let ${\bf A} = (\emptyset, \emptyset)$;

\eIf{${\bf A}^{*}$ is not EQ1}{
 Assume that $v_i(A_1^{*}) \geq v_2(A_2^{*})$;

\eIf{$v_2(A_2^*)\geq \frac{1}{2}v_1(A_1^{*})$}{

Find a subset $T\subseteq A_1^*$ such that $v_1(T) \geq v_2(A_2^*)$ and $v_1(T \setminus \{e\}) \leq v_2(A_2^*)$ for some $e \in T$; 

$A_1 = T$ and $A_2 = A_2^{*}$;

}
{
\eIf{$ \frac{1}{2} \leq v_1(A_1^{*}) \leq \frac{2}{3}$}{
Find a subset $T \subseteq A_1^*$ such that $v_2(T) \geq v_1(A_2^*)$ and $v_2(T \setminus \{e\}) < v_1(A_2^*)$ for some $e \in T$;

$A_1 = A_2^*$ and $A_2 = T$;
}
{
\eIf{$\exists e^* \in A_1^*$ such that $v_1(e^*) \geq \frac{1}{2}v_1(A_1^*)$}{
\eIf{$v_1(e^*) \geq  v_2(M\setminus \{e^*\})$}{

$A_1 = \{e^*\}$ and $A_2 = M \setminus \{e^*\}$;
}
{

Find a subset $T \subseteq M \setminus \{e^*\}$ such that $v_2(T) \geq v_1(e^*)$ and $v_2(T \setminus \{e\}) < v_1(e^*)$ for some $e \in T$;

$A_1 = \{e^*\}$ and $A_2 = T$;

}

}
{
Let $e_{max} \in \arg \max_{e \in A_1^*}v_2(e)$;

\eIf{$v_1(A_1^* \setminus \{e_{max}\}) \geq   v_2(A_2^* \cup   \{e_{max}\})$}{

Find a subset $T \subseteq A_2^* \cup \{e_{max}\}$ such that $v_2(T) \geq v_1(A_1^* \setminus \{e_{max}\})$ and $v_2(T \setminus \{e\}) < v_1(A_1^* \setminus \{e_{max}\})$ for some $e \in T$;

$A_1 = A_1^* \setminus \{e_{max}\}$ and $A_2 = T$;
}
{
Find a subset $T \subseteq A_1^* \setminus \{e_{max}\}$ such that $v_1(T) \geq v_2(A_2^* \cup \{e_{max}\})$ and $v_1(T \setminus \{e\}) < v_2(A_2^* \cup \{e_{max}\})$ for some $e \in T$;

$A_1 = T$ and $A_2 = A_2^* \cup \{e_{max}\}$;

}
}

}
}

}
{$A_1 = A_1^*$ and $A_2 = A_2^*$;}

\uIf{ $M \setminus (A_1 \cup A_2) \neq \emptyset$}{
%  Let $M^{\prime} = M \setminus (A_1 \cup A_2)\}$;

%  \While{$M^{\prime} \neq \emptyset$}{
% Pick an arbitrary item $e \in M^{\prime}$ and let $i^{min} = \arg \min_{i \in [2]} v_i(A_i)$; 

% $A_{i^{min}}$ = $A_{i^{min}} \cup \{e\}$ and $M^{'} = M^{'} \setminus \{e\}$;
% }
Execute Algorithm \ref{alg: nonmonotone_subsidy} to extend the partial allocation ${\bf A}$ to a complete one;
}

\Return ${\bf A}$;

\end{algorithm}

\subsection{Equitability and Envy-Freeness with Efficiency Guarantees}
This part examines the combination of two notions of fairness, equitability and envy-freeness.
From the previous section, we establish that when one allocation satisfies the conditions specified in Lemma \ref{the: characterization_ef_eq}, the required minimum subsidy for achieving both equitability and envy-freeness is ``optimally bounded''.
Thus, our focus shifts to identifying allocations that are both equitable-convertible and envy-freeable while simultaneously guaranteeing some level of efficiency.
We begin by presenting a negative result demonstrating the incompatibility between the "equitable-convertible and envy-freeable" property and Nash welfare.
\begin{proposition}
\label{prop: no_efficiency_ef_eq}
    For any $\alpha>0$, in the goods setting, there exists an instance where no allocation that is both equitable-convertible and envy-freeable can guarantee $\alpha$-max NSW, even for additive valuations.
\end{proposition}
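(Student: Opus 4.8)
The plan is to reuse the instance constructed in the proof of Proposition~\ref{the: subsidy_no_efficiency_scaled} (Table~\ref{tab: subsidy_no_efficiency}): $n$ agents with additive valuations and $m \gg n$ goods, where agent~$1$ values every item at $1$, while each agent $i \geq 2$ values only item $e_i$ at $\epsilon \in (0, 1/n)$ and every other item at $0$. The engine of the argument is the characterization in Lemma~\ref{the: characterization_ef_eq}: an allocation $\mathbf{A}$ is simultaneously equitable-convertible and envy-freeable if and only if every bundle is valued weakly highest by its own owner, i.e. $v_j(A_j) \geq v_i(A_j)$ for all $i, j \in N$.

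First I would show that in this instance the only equitable-convertible and envy-freeable allocation gives all items to agent~$1$. Consider any agent $j \neq 1$ and suppose $A_j \neq \emptyset$. Taking $i = 1$ in the characterization requires $v_j(A_j) \geq v_1(A_j)$. But $v_1(A_j) = \lvert A_j \rvert \geq 1$ since agent~$1$ values each item at $1$, whereas $v_j(A_j) \leq \epsilon < 1$ because agent~$j$ derives positive value only from the single item $e_j$. This contradiction forces $A_j = \emptyset$ for every $j \neq 1$, hence agent~$1$ receives the whole set $M$. Conversely, this allocation is indeed equitable-convertible and envy-freeable, as $v_1(M) = m \geq \epsilon = v_j(M)$ and all empty bundles trivially satisfy the condition.

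Next I would evaluate the Nash welfare. The unique equitable-convertible and envy-freeable allocation gives every agent $j \neq 1$ a value of $0$, so its Nash welfare is $(m \cdot 0 \cdots 0)^{1/n} = 0$. On the other hand, the maximum Nash welfare is strictly positive: allocating item $e_i$ to agent~$i$ for each $i \geq 2$ and the remaining $m-n+1$ items to agent~$1$ yields $\bigl[(m-n+1)\,\epsilon^{n-1}\bigr]^{1/n} > 0$. Therefore the best equitable-convertible and envy-freeable allocation attains the ratio $0$, which lies below $\alpha$ for every $\alpha > 0$, establishing the claim.

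The argument is essentially forced once the characterization is invoked, so I anticipate no serious obstacle; the only point requiring care is confirming that the characterization leaves literally no room for any non-trivial bundle to go to an agent other than agent~$1$—the gap between $v_1(A_j) \geq 1$ and $v_j(A_j) \leq \epsilon$ closes this cleanly—and verifying that the max-NSW allocation is genuinely positive, which needs $m \geq n$ and is already guaranteed by $m \gg n$.
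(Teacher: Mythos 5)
Your proof is correct and follows essentially the same route as the paper: both invoke Lemma~\ref{the: characterization_ef_eq} to force the unique equitable-convertible and envy-freeable allocation to hand all items to agent~1, and then observe that this allocation has Nash welfare $0$ while the maximum Nash welfare is strictly positive. The only difference is cosmetic --- the paper uses an instance where agents $i \neq 1$ value every item at $1-\epsilon$, whereas you reuse the instance from Table~\ref{tab: subsidy_no_efficiency}; both choices make the characterization close off every other allocation in the same way.
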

  \begin{proof}
Consider an instance with $n$ agents who have additive valuations and $m$ items. 
For agent $1$, the value of each item is $1$.
For the remaining agent $i \neq 1$, the value of each item is $(1-\epsilon)$ $(0<\epsilon \leq 1)$.
By Theorem \ref{the: characterization_ef_eq}, in this instance, the only equitable-convertible and envy-freeable allocation ${\bf A}$ is that agent $1$ picks all items since for any item $e \in M$, we have $v_1(e) > v_i(e)$ for any $i \neq 1$. 
Then, we can derive that the Nash welfare of allocation $\bf{A}$ is 0, but the maximum Nash welfare of this instance is trivially non-zero.
\end{proof}

However, for additive valuations, we show that a maximum utilitarian social welfare allocation is always equitable-convertible and envy-freeable.
\begin{proposition}
\label{the: additive_utilitarian_eq_ef}
    For additive valuations in the goods setting, maximum utilitarian social welfare allocations are both equitable-convertible and envy-freeable. 
\end{proposition}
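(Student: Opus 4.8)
The plan is to reduce the claim to the characterization in Lemma~\ref{the: characterization_ef_eq}, which states that an allocation ${\bf A}$ is simultaneously equitable-convertible and envy-freeable with the same payment vector if and only if $v_j(A_j) \geq v_i(A_j)$ for every pair of agents $i,j \in N$. Hence it suffices to verify that any maximum utilitarian social welfare allocation satisfies this single inequality, after which the lemma immediately delivers the conclusion.

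First I would establish the structural property of max-USW allocations under additive valuations: every item is assigned to an agent who values it most. To see this I would run a standard exchange argument. Since valuations are additive, $USW({\bf A}) = \sum_{e \in M} v_{\sigma(e)}(e)$, where $\sigma(e)$ denotes the agent receiving item $e$. If some item $e \in A_j$ satisfied $v_i(e) > v_j(e)$ for another agent $i$, then moving $e$ from $j$ to $i$ would strictly increase the utilitarian social welfare by $v_i(e) - v_j(e) > 0$, contradicting optimality. Therefore, for every agent $j$ and every item $e \in A_j$, we have $v_j(e) \geq v_i(e)$ for all $i \in N$.

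I would then conclude by summing over each bundle. Fixing any two agents $i,j \in N$, additivity together with the per-item inequality gives $v_j(A_j) = \sum_{e \in A_j} v_j(e) \geq \sum_{e \in A_j} v_i(e) = v_i(A_j)$, which is exactly condition (2) of Lemma~\ref{the: characterization_ef_eq}. Applying that characterization shows that the allocation is both equitable-convertible and envy-freeable, as desired.

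There is essentially no hard obstacle here; the only point requiring care is the treatment of ties when several agents share the maximum value for some item. The exchange argument handles this cleanly, since it only shows that the recipient is \emph{among} the top-valuers rather than the unique one. Consequently the per-item inequality $v_j(e) \geq v_i(e)$ holds regardless of how ties are broken, and the additive summation step goes through unchanged.
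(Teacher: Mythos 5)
Your proposal is correct and follows essentially the same route as the paper's proof: an exchange argument showing each item in a max-USW allocation goes to an agent valuing it (weakly) most, summing by additivity to get $v_j(A_j) \geq v_i(A_j)$, and then invoking the characterization in Lemma~\ref{the: characterization_ef_eq}. Your writeup is somewhat more explicit about the tie-breaking issue, but the argument is the same.
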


\begin{proof}
 Given any instance, consider a maximum utilitarian social welfare allocation ${\bf A}^{*}$. 
Now pick an arbitrary agent $i \in N$.
For any item $e \in A_{i}^{*}$, it must hold that $v_i(e) \geq v_j(e)$ for any agent $j \neq i$. 
 Otherwise, we could find another allocation that has a higher utilitarian social welfare.
 Thus, we get $v_i(A_{i}^{*})  = \sum_{e \in A_{i}^{*}}v_i(e) \geq \sum_{e \in A_{i}^{*}}v_j(e) = v_j(A_{i}^{*})$ for any agent $j \neq i$.
 By Theorem \ref{the: characterization_ef_eq}, it is clear that ${\bf A}^{*}$ is both equitable-convertible and envy-freeable.
 \end{proof}

Then, we complement the above result by showing that if agents have beyond additive valuations like submodular valuations, the above theorem may not hold.

\begin{proposition}
\label{pro: submodular_maximizing_social_welfare}
    For submodular valuations in the goods setting, there exists an instance where maximum utilitarian social welfare allocations are not both equitable-convertible and envy-freeable. 
\end{proposition}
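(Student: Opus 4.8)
The plan is to produce an explicit two-agent, two-item instance with submodular valuations in which every maximum-USW allocation violates condition~(2) of Lemma~\ref{the: characterization_ef_eq}, namely $v_j(A_j)\ge v_i(A_j)$ for all $i,j\in N$. The conceptual reason such an instance should exist is exactly the place where the proof of Proposition~\ref{the: additive_utilitarian_eq_ef} breaks down: for additive valuations a maximum-USW allocation hands each item to its highest \emph{per-item} valuer, which forces $v_j(A_j)=\sum_{e\in A_j}v_j(e)\ge\sum_{e\in A_j}v_i(e)=v_i(A_j)$ item-by-item. Under submodularity an agent's marginal value for an item depends on what she already owns, so efficiency can force an item onto an agent $j$ who is \emph{not} its highest single-item valuer, breaking this inequality.

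Concretely, I would take $N=\{1,2\}$ and $M=\{a,b\}$, let agent $1$ have the (substitutes) valuation $v_1(\emptyset)=0$ and $v_1(a)=v_1(b)=v_1(\{a,b\})=1$, and let agent $2$ be additive with $v_2(a)=v_2(b)=\tfrac{3}{5}$. Agent $1$ is submodular because the marginal of the second item, $v_1(\{a,b\})-v_1(a)=0$, does not exceed the marginal of the first, $v_1(a)-v_1(\emptyset)=1$; agent $2$ is additive, hence submodular; and both respect the scaling assumption $\lvert\Delta_i(S,e)\rvert\le 1$.

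Next I would enumerate the four allocations and compute their USW: giving both items to agent $1$ yields $1$, giving both to agent $2$ yields $\tfrac{6}{5}$, and either split (one item to each agent) yields $1+\tfrac{3}{5}=\tfrac{8}{5}$. Hence the maximum USW is $\tfrac{8}{5}$, attained exactly by the two split allocations. For either split, agent $2$ receives a single item $e$ with $v_2(A_2)=\tfrac{3}{5}$ while $v_1(A_2)=v_1(e)=1$, so $v_2(A_2)<v_1(A_2)$, violating condition~(2) of Lemma~\ref{the: characterization_ef_eq}; by that lemma neither maximum-USW allocation is simultaneously equitable-convertible and envy-freeable, proving the claim.

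The only real obstacle is the bookkeeping: verifying that (i) the constructed $v_1$ is genuinely submodular and respects the $\lvert\Delta_i(S,e)\rvert\le 1$ scaling, and (ii) the split is the \emph{unique} USW-maximizer, so that no efficient allocation escapes the violation. Both reduce to the short arithmetic above, and everything else is routine.
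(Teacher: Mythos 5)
Your proposal is correct and follows essentially the same route as the paper: construct a small explicit submodular instance, enumerate the allocations to pin down the maximum-USW ones, and show they violate condition (2) of Lemma~\ref{the: characterization_ef_eq}. The paper uses two agents and four items with capped valuations $v_1(S)=\min\{|S|,\tfrac{5}{4}\}$-style functions, whereas your two-item instance with a unit-demand-like agent is a slightly leaner realization of the same idea, and all of your arithmetic and submodularity checks go through.
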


\begin{proof}
    Consider an instance with two agents who have submodular valuations and four items $M = \{e_1, e_2, e_3, e_4\}$ that are all goods.
    The valuation of each agent is given as follows.
    For any subset $S \subseteq M$, we have
    $$v_1(S)=\left\{
    \begin{aligned}
        1& \quad \text{if} ~ |S|=1,\\
        \frac{5}{4}& \quad \text{if}~ |S|\ge 2
    \end{aligned}
    \right.
    $$.
    $$v_2(S)=\left \{
    \begin{aligned}
        1& \quad \text{if}~ |S|=1,\\
        \frac{3}{2}& \quad \text{if}~ |S|\ge 2
    \end{aligned}
    \right.$$
 It can be verified that in any maximum utilitarian social welfare allocation, each agent only gets two items. 
 Let ${\bf A}=(A_1, A_2)$ be a maximum utilitarian social welfare allocation, where $A_1=\{e_1, e_2\}$ and $A_2=\{e_3, e_4\}$.
 Then, we have $v_1(A_1) < v_2(A_1)$, so allocation ${\bf A}$ cannot be both equitable-convertible and envy-freeable.
\end{proof}

We now turn our attention to a special case of submodular valuation functions known as matroid rank valuations, which have been extensively studied in the fair division literature \cite{barman2021existence,barman2022achieving}.
Remarkably, for matroid rank valuations, 
we show that allocations either maximizing utilitarian social welfare or Nash welfare are both equitable-convertible and envy-freeable, and can be computed efficiently.

% Before we introduce the lemma and theorem, we give the following definition:
% an allocation ${\bf A}$ is said to be clean if $v_i(A_i)>v_i(A_i\setminus \{e\})$ for any agent $i\in N$ with $A_i \neq \emptyset$ and any item $e\in A_i$.

% Here, we give the following lemma for matroid rank valuation.
% \begin{lemma}[Benabbou et al. \cite{benabbou2021finding}]
% \label{lem: matroid_rank_utilitarian_optimal_optimL}
%    For matroid rank valuations in the goods setting, a clean and maximizing utilitarian social welfare allocation can be computed in polynomial time.
% \end{lemma}

\begin{theorem}
\label{the: matroid_rank_eq_ef}
For matroid rank valuations in the goods setting, there exists a maximum Nash welfare (or maximum utilitarian social welfare) allocation that is both equitable-convertible and envy-freeable, and both of them can be computed in polynomial time. 
\end{theorem}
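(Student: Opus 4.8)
The plan is to reduce everything to the characterization in Lemma~\ref{the: characterization_ef_eq}: an allocation $\mathbf{A}$ is both equitable-convertible and envy-freeable (with one payment vector) if and only if $v_j(A_j) \ge v_i(A_j)$ for all $i,j$, i.e. every bundle is valued most by its owner. So it suffices to exhibit a max-NSW (resp. max-USW) allocation satisfying this inequality, and to compute it in polynomial time. The key structural fact I would exploit is that for matroid rank valuations every marginal is at most one, so $v_i(S) \le |S|$ for all $i$ and $S$. Hence any \emph{clean} allocation, meaning one with $v_j(A_j)=|A_j|$ for every $j$ (each bundle independent in its owner's matroid), automatically satisfies $v_i(A_j) \le |A_j| = v_j(A_j)$ and is therefore equitable-convertible and envy-freeable by Lemma~\ref{the: characterization_ef_eq}. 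This clean lemma is the heart of the argument.

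Next I would produce an optimal allocation whose ``useful core'' is clean. For max-USW I would compute, via matroid union, pairwise-disjoint independent sets $I_1,\dots,I_n$ (with $I_i$ independent in agent $i$'s matroid) of maximum total size; this is exactly a max-USW allocation of the items that carry value and is computable in polynomial time in the value-oracle model. For max-NSW I would invoke the known polynomial-time Nash-welfare routines for matroid rank valuations \cite{barman2021existence,barman2022achieving}, which return a non-redundant optimal allocation. On its core each such allocation is clean, so the required inequality holds as long as we only compare the allocated core items.

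The main obstacle is the \emph{full-partition requirement} $\bigcup_i A_i = M$: the leftover items lying outside the clean core must still be handed out, and dumping a redundant item into the wrong bundle can make another agent value that bundle strictly more than its owner, destroying the inequality. Indeed one can construct max-USW (and max-NSW) allocations that are \emph{not} equitable-convertible-and-envy-freeable, using parallel elements shared across agents' matroids, so ``any optimal allocation'' will not do and a careful choice is essential. By optimality of the core, each leftover item $e$ has zero marginal in every agent's matroid with respect to the core, i.e. $e \in \mathrm{span}_i(I_i)$ for all $i$; the technical step is to place $e$ into a bundle $A_j$ in which it remains spanned for \emph{every} agent, so that $v_j(A_j)=|I_j|$ and $v_i(A_j)\le |I_j|$ are both preserved. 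Equivalently, I would run a local-repair loop: while some agent $i$ has $v_i(A_j) > v_j(A_j)$, move a redundant-for-$j$ item of $A_j$ that contributes to $i$'s rank over to agent $i$; since the allocation is already optimal, each such transfer is both welfare-neutral and Nash-neutral, and I would control progress with a potential such as the total cross-valuation $\sum_{i\neq j} v_i(A_j)$.

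Proving that this repair never gets stuck (a valid target bundle always exists) and terminates after polynomially many neutral transfers is the crux; the rest is bookkeeping. Once the repair halts we have $v_j(A_j)\ge v_i(A_j)$ for all $i,j$ while optimality of USW (resp. NSW) is untouched, so the clean lemma and Lemma~\ref{the: characterization_ef_eq} deliver an equitable-convertible and envy-freeable maximum-welfare allocation. Since matroid union, the matroid-rank Nash-welfare routine, and the repair phase all run in polynomial time, both allocations are computed efficiently, completing the proof.
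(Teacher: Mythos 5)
Your overall route is the same as the paper's: reduce to Lemma~\ref{the: characterization_ef_eq}, observe that for matroid rank valuations a \emph{clean} bundle satisfies $v_j(A_j)=|A_j|\ge v_i(A_j)$ for every $i$, and invoke known polynomial-time routines (matroid union / Yankee-Swap-type algorithms) to obtain a clean optimal allocation. The paper does exactly this: it takes the clean max-USW allocation of Lemma~\ref{lem: matroid_rank_utilitarian_optimal_optimL} (resp.\ the clean max-NSW allocation returned by General Yankee Swap \cite{Viswanathan23}), notes $v_i(A_i)=|A_i|\ge v_j(A_i)$, and applies the characterization. So the key lemma and the decomposition are identical.

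The gap is precisely in the part you yourself label ``the crux.'' You correctly identify that the clean core $(I_1,\dots,I_n)$ need not exhaust $M$, that the redundant items must still be placed to obtain a partition of $M$, and that a careless placement can create $v_i(A_j)>v_j(A_j)$; this concern is real (take two agents with $I_1=\{a_1,a_2\}$, $I_2=\{b\}$ and one leftover item $e$ that is parallel to $a_1$ in agent~1's matroid and parallel to $b$ in agent~2's matroid while $\{b,e\}$ is independent for agent~1: putting $e$ with $b$ yields $v_1(A_2)=2>1=v_2(A_2)$ in a max-USW allocation). But you then only \emph{assert} that a local-repair loop fixes this. You do not show that whenever $v_i(A_j)>v_j(A_j)$ a transferable item with the required properties exists, nor that the transfer preserves optimality \emph{and} strictly decreases your potential $\sum_{i\ne j}v_i(A_j)$ --- moving $e$ into $A_i$ may increase $v_k(A_i)$ for other agents $k$, so that potential is not obviously monotone --- nor that the loop terminates in polynomially many steps. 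Deferring exactly the step you call the crux leaves the argument incomplete: as written it establishes the theorem only when the clean optimal core happens to cover all of $M$. (For what it is worth, the paper's own proof does not confront this issue either; it treats the cited algorithms' output as a clean \emph{complete} allocation, which is only possible when every item has positive marginal to its recipient. You have put your finger on a genuine subtlety, but identifying it is not the same as closing it.)
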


Before we show Theorem \ref{the: matroid_rank_eq_ef}, we give the definition of clean allocations and a key lemma.

\begin{definition}[Clean Allocation]
An allocation ${\bf A}$ is said to be clean if $v_i(A_i)>v_i(A_i\setminus \{e\})$ for any agent $i\in N$ with $A_i \neq \emptyset$ and any item $e\in A_i$    
\end{definition}

\begin{lemma}\cite{benabbou2021finding}
\label{lem: matroid_rank_utilitarian_optimal_optimL}
   For matroid rank valuations in the goods setting, a clean maximum utilitarian social welfare allocation can be computed in polynomial time.
\end{lemma}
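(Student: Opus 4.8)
The plan is to reduce both halves of the theorem to a single structural observation: under matroid rank valuations, \emph{any} clean allocation already satisfies condition~(2) of Lemma~\ref{the: characterization_ef_eq}, and therefore is automatically both equitable-convertible and envy-freeable. Concretely, I would first prove the following key fact. Let $\mathbf{A}$ be a clean allocation and fix two agents $i,j$. Since $v_j$ is a matroid rank function and $\mathbf{A}$ is clean, every item of $A_j$ contributes marginal value exactly one to $j$, so $v_j(A_j)=|A_j|$. On the other hand $v_i$ has marginals bounded by one, whence $v_i(A_j)\le |A_j|$. Combining these, $v_j(A_j)=|A_j|\ge v_i(A_j)$ holds for all $i,j$, which is exactly the characterizing inequality of Lemma~\ref{the: characterization_ef_eq}; thus $\mathbf{A}$ is both equitable-convertible and envy-freeable. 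This reduces the theorem to exhibiting a \emph{clean} maximum-USW allocation and a \emph{clean} maximum-NSW allocation, each computable in polynomial time.

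The utilitarian case is then immediate. Lemma~\ref{lem: matroid_rank_utilitarian_optimal_optimL} supplies a clean maximum-USW allocation in polynomial time, and the key fact above certifies that this allocation is both equitable-convertible and envy-freeable.

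For the Nash case I would argue in two steps. First, existence of a clean optimum: take any maximum-NSW allocation $\mathbf{B}$ and, for each agent $j$, replace $B_j$ by a maximal $v_j$-independent subset $C_j\subseteq B_j$ (a basis of $B_j$ in $j$'s matroid), so that $v_j(C_j)=v_j(B_j)$ and $C_j$ is clean for $j$. Each discarded item $e\in B_j\setminus C_j$ is redundant for its holder, i.e. $v_j(B_j\setminus\{e\})=v_j(B_j)$; were such an $e$ beneficial to some other agent, transferring it would strictly raise that agent's value while leaving every other value unchanged, contradicting the optimality of $\mathbf{B}$ (under the standard convention for Nash welfare, this is an improvement both when the recipient already has positive value and when it moves the recipient from value zero to positive). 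Hence every discarded item is a loop for every agent and can be appended to any bundle without changing any $v_i(\cdot)$ value; assigning the discarded items arbitrarily therefore preserves both Nash optimality and the inequality $v_j(A_j)\ge v_i(A_j)$ established above. This produces a maximum-NSW allocation satisfying the condition of Lemma~\ref{the: characterization_ef_eq}. Second, for polynomial-time computation I would invoke the known algorithms for matroid rank valuations from the same line of work as Lemma~\ref{lem: matroid_rank_utilitarian_optimal_optimL} (e.g. \cite{benabbou2021finding,barman2021existence}), which compute a maximum-NSW allocation efficiently; cleaning each bundle down to a basis costs only a polynomial number of oracle calls, so the whole procedure runs in polynomial time.

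The main obstacle is the Nash part, specifically the interaction between cleanness and item redundancy: I must rule out the bad configuration where a bundle $A_j$ carries an item worthless to $j$ but valuable to another agent $i$, as that is exactly what would break $v_j(A_j)\ge v_i(A_j)$. The optimality-based exchange argument is what closes this gap, and the delicate point is to show that the max-NSW optimum is actually \emph{attained} by a clean allocation rather than merely that cleaning preserves values; this rests on the positivity (or the standard zero-handling convention) of the Nash factors at the optimum together with the binary-marginal structure. A secondary technical care is that the partition must allocate every item, so one must observe that universally-useless (loop) items can be distributed freely without affecting any valuation or the characterization, reconciling the completeness requirement with the cleanness argument.
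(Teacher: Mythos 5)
Your proposal does not prove the statement in question. The statement is Lemma~\ref{lem: matroid_rank_utilitarian_optimal_optimL} itself: that for matroid rank valuations a \emph{clean} maximum utilitarian social welfare allocation can be \emph{computed in polynomial time}. This is a result the paper imports from \cite{benabbou2021finding} without reproving it. What you have written is instead a proof of (essentially) Theorem~\ref{the: matroid_rank_eq_ef}, and in your utilitarian branch you declare the case ``immediate'' precisely because ``Lemma~\ref{lem: matroid_rank_utilitarian_optimal_optimL} supplies a clean maximum-USW allocation in polynomial time.'' Read as a proof of the lemma, this is circular: you invoke the very statement to be proved as a black box. The structural fact you do establish --- that any clean allocation under matroid rank valuations satisfies $v_j(A_j)=|A_j|\ge v_i(A_j)$ and hence meets condition (2) of Lemma~\ref{the: characterization_ef_eq} --- is correct, and it is indeed the key step in the paper's proof of Theorem~\ref{the: matroid_rank_eq_ef}; but it says nothing about how to \emph{find} a clean welfare-maximizing allocation algorithmically, which is the entire content of the lemma.

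A genuine proof of the lemma needs algorithmic substance: (i) show that maximizing $\sum_i v_i(A_i)$ under matroid rank (binary submodular) valuations reduces to a matroid union / maximum common independent set computation, solvable in polynomial time with value-oracle access via augmenting-path (exchange-graph) techniques; and (ii) show that the optimum can be taken clean, e.g.\ by repeatedly deleting items of zero marginal value from bundles (each check is one oracle call, and there are at most $m$ deletions), while handling the reassignment of deleted items so that the output remains a complete partition and the welfare is unchanged. Neither step appears in your write-up; your Nash-welfare discussion likewise defers all computation to ``known algorithms from the same line of work,'' which again presupposes rather than proves polynomial-time computability. Since the paper itself treats this lemma as cited prior work, there is no internal proof to compare against; but as a blind proof attempt of the quoted statement, your proposal is addressed to the wrong target and, where it touches the target, assumes it.
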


\begin{proof}[Proof of Theorem \ref{the: matroid_rank_eq_ef}]
 First, we consider the maximum Nash welfare allocation. 
 % Let ${\bf A}=(A_1, \cdots, A_n)$ be the clean allocation that maximizing Nash welfare.
 Let ${\bf A}=(A_1, \ldots, A_n)$ be the allocation computed by the General Yankee Swap Algorithm with the parameter $w_i = \frac{1}{n}$ for any $i \in N$ \cite{Viswanathan23}. 
 It is not hard to check that ${\bf A}$ is a clean maximum Nash welfare allocation.
 For each agent $i\in N$, we have $v_i(A_i)>v_i(A_i\setminus\{e\})$ for any $e\in A_i$. Since $v_i$ is a matroid rank valuation, we get $v_i(A_i)=|A_i|$, then, $v_i(A_i)\ge v_j(A_i)$ for all $j\ne i$. 
 By Theorem \ref{the: characterization_ef_eq}, it is clear that allocation ${\bf A}$ is both equitable-convertible and envy-freeable.
 
 Next, consider the maximum utilitarian social welfare allocation.
 By Lemma \ref{lem: matroid_rank_utilitarian_optimal_optimL}, it is clear that a clean maximum utilitarian social welfare allocation can be computed in polynomial time.
 Then, the proof is similar to the above one, and we omit it.
\end{proof}

\section{Conclusion and Future Work}
In this paper, we investigate the fairness notion of equitability and study the subsidy for achieving equitability, and present a series of positive and negative results.
% First, we delve into the analysis of the amount of payment required for an equitable allocation beyond additive valuations. 
% % study how much payment is needed for an EQ allocation beyond additive valuations.
% Next, we study the allocation that achieves both equitability and envy-freeness with the same subsidy, and we provide the characterization.
% At last, we study the subsidy for achieving equitability or both equitability and envy-freeness with efficiency, and present different results for two efficiency notions, including utilitarian social welfare and Nash welfare.
For future work, one promising direction involves bridging the gap between the lower and upper bounds of the required minimum subsidy for achieving equitability in the mixed items setting with additive or more general valuations. 
Another interesting direction is to study what Nash welfare guarantees can be achieved for EQ1 allocations when there are more than two agents with normalized subadditive or additive valuations.
% Additionally, it is also worth studying other fairness notions that may not exist in the general case, such as MMS, with subsidy.
%At last, we discuss the combination of equitability and envy-freeness with subsidy.As for future work, one is to close the gap between the lower and upper bounds of subsidy for equitability for the general non-monotone valuations, and the other is to study other fairness notions like maximin share (MMS) with subsidy.

\newpage
\bibliographystyle{plainnat}
\bibliography{ref}

\end{document}